\newcommand{\valpha}{\boldsymbol{\alpha}}
\newcommand{\veta}{\boldsymbol{\eta}}
\begin{document}

\title{Revisiting the Equivalence Problem for Finite Multitape Automata}
\author{ James
  Worrell\thanks{Supported by EPSRC grant EP/G069727/1.}}
\institute{Department of Computer Science, University of Oxford, UK}

\maketitle

\begin{abstract}
  The decidability of determining equivalence of deterministic
  multitape automata (or transducers) was a longstanding open problem
  until it was resolved by Harju and Karhum\"{a}ki in the early 1990s.
  Their proof of decidability yields a \textbf{co-NP} upper bound, but
  apparently not much more is known about the complexity of the
  problem.  In this paper we give an alternative proof of
  decidability, which follows the basic strategy of Harju and
  Karhum\"{a}ki but replaces their use of group theory with results on
  matrix algebras.  From our proof we obtain a simple randomised
  algorithm for deciding language equivalence of deterministic
  multitape automata and, more generally, multiplicity equivalence of
  nondeterministic multitape automata.  The algorithm involves only
  matrix exponentiation and runs in polynomial time for each fixed
  number of tapes.  If the two input automata are inequivalent then
  the algorithm outputs a word on which they differ.
\end{abstract}

\section{Introduction}
One-way multitape finite automata were introduced in the seminal 1959
paper of Rabin and Scott~\cite{RabinS59}.  Such automata (under
various restrictions) are also commonly known as transducers---see
Elgot and Mezei~\cite{ElgotM63} for an early reference.  A multitape
automaton with $k$ tapes accepts a $k$-ary relation on words.  The
class of relations recognised by deterministic automata coincides with
the class of $k$-ary rational relations~\cite{ElgotM63}.

Two multitape automata are said to be equivalent if they accept the
same relation.  Undecidability of equivalence of non-deterministic
automata is relatively straightforward~\cite{Griffiths68}.  However
the deterministic case remained open for many years, until it was
shown decidable by Harju and Karhum\"{a}ki~\cite{HarjuK91}.  Their
solution made crucial use of results about ordered
groups---specifically that a free group can be endowed with a
compatible order~\cite{Neumann49b} and that the ring of formal power
series over an ordered group with coefficients in a division ring and
with well-ordered support is itself a division ring (due independently
to Malcev~\cite{Malcev48} and Neumann~\cite{Neumann49}).  Using these
results~\cite{HarjuK91} established the decidability of
\emph{multiplicity equivalence} of non-deterministic multitape
automata, i.e., whether two non-deterministic multitape automata have
the same number of accepting computations on each input.  Decidability
in the deterministic case (and, more generally, the unambiguous case)
follows immediately.  We refer the reader to~\cite{Sakarovich03} for a
self-contained account of the proof, including the underlying group
theory.

Harju and Karhum\"{a}ki did not address questions of complexity
in~\cite{HarjuK91}.  However the existence of a \textbf{co-NP}
\emph{guess-and-check} procedure for deciding equivalence of
deterministic multitape automata follows directly from~\cite[Theorem
  8]{HarjuK91}.  This theorem states that two inequivalent automata
are guaranteed to differ on a tuple of words whose total length is at
most the total number of states of the two automata. Such a tuple can
be guessed, and it can be checked in polynomial time whether the tuple
is accepted by one automaton and rejected by the other.  In the
special case of two-tape deterministic automata, a polynomial-time
algorithm was given in~\cite{FriedmanG82}, before decidability was
shown in the general case.

A \textbf{co-NP} upper bound also holds for multiplicity equivalence
of $k$-tape automata for each fixed $k$.  However, as we observe
below, if the number of tapes is not fixed, computing the number of
accepting computations of a given non-deterministic multitape automata
on a tuple of input words is \#\textbf{P}-hard.  Thus the
guess-and-check method does not yield a \textbf{co-NP} procedure for
multiplicity equivalence in general.

It is well-known that the equivalence problem for single-tape weighted
automata with rational transition weights is solvable in polynomial
time~\cite{Schutzenberger61,Tzeng}.  Now the decision procedure
in~\cite{HarjuK91} reduces multiplicity equivalence of multitape
automata to equivalence of single-tape automata with transition
weights in a division ring of power series over an ordered group.
However the complexity of arithmetic in this ring seems to preclude an
application of the polynomial-time procedures
of~\cite{Schutzenberger61,Tzeng}.  Leaving aside issues of
representing infinite power series, even the operation of multiplying
a family of polynomials in two non-commuting variables yields a result
with exponentially many monomials in the length of its input.

In this paper we give an alternative proof that multiplicity
equivalence of multitape automata is decidable, which also yields new
complexity bounds on the problem.  We use the same basic idea
as~\cite{HarjuK91}---reduce to the single-tape case by enriching the
set of transition weights.  However we replace their use of power
series on ordered groups with results about matrix algebras and
Polynomial Identity rings (see Remark~\ref{rem:difference} for a more
technical comparison).  In particular, we use the Amitsur-Levitzki
theorem concerning polynomial identities in matrix algebras.  Our use
of the latter is inspired by the work of~\cite{BogdanovW05} on
non-commutative polynomial identity testing, and our starting point is
a simple generalisation of the approach of~\cite{BogdanovW05} to what
we call \emph{partially commutative} polynomial identity testing.

Our construction for establishing decidability immediately yields a
simple randomised algorithm for checking multiplicity equivalence of
multitape automata (and hence also equivalence of deterministic
automata).  The algorithm involves only matrix exponentiation, and
runs in polynomial time for each fixed number of tapes. 

%  We leave open
% the question of whether there is a deterministic polynomial-time
% algorithm for the same problem.  We also leave open whether, for an
% arbitrary number of tapes, equivalence of deterministic automata or
% multiplicity equivalence of nondeterministic automata can be decided
% in (deterministic or randomised) polynomial time.

%% The first part of the paper, Section~\ref{sec:semi}, develops the
%% basic algebraic results.  In the second part,
%% Section~\ref{sec:multitape}, we recall the definition of multitape
%% automata and apply the above-mentioned results to derive decidability
%% and complexity bounds on the equivalence problem.

\section{Partially Commutative Polynomial Identities}
\label{sec:semi}

\subsection{Matrix Algebras and Polynomial Identities}

Let $F$ be an infinite field.  Recall that an $F$-algebra is a vector
space over $F$ equipped with an associative bilinear product that has
an identity .  Write $F\langle X \rangle$ for the free $F$-algebra
over a set $X$.  The elements of $F\langle X\rangle$ can be viewed as
polynomials over a set of non-commuting variables $X$ with
coefficients in $F$.  Each such polynomial is an $F$-linear
combination of monomials, where each monomial is an element of $X^*$.
The degree of a polynomial is the maximum of the lengths of its
monomials.

Let $A$ be an $F$-algebra and $f \in F\langle X \rangle$.  If $f$
evaluates to $0$ for all valuations of its variables in $A$ then we
say that $A$ satisfies the \emph{polynomial identity} $f=0$.  For
example, an algebra satisfies the polynomial identity $xy-yx=0$ if and
only if it is commutative.  Note that since the variables $x$ and
$y$ do not commute, the polynomial $xy-yx$ is not identically zero.

We denote by $M_n(F)$ the $F$-algebra of $n\times n$ matrices with
coefficients in $F$.  The Amitsur-Levitzki theorem~\cite{AL50,Cohn03}
is a fundamental result about polynomial identities in matrix
algebras.
\begin{theorem}[Amitsur-Levitzki]
The algebra $M_n(F)$ satisfies the polynomial identity
\[ \sum_{\sigma \in S_{2n}} x_{\sigma(1)}\ldots x_{\sigma(2n)} = 0 \, ,\]
where the sum is over the $(2n)!$ elements of the symmetric group $S_{2n}$.
Moreover $M_n(F)$ satisfies no identity of degree less than $2n$.
\label{thm:AL}
\end{theorem}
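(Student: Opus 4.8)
The plan is to prove the identity by Rosset's exterior-algebra argument and the lower bound by a substitution of matrix units. Write $s_{2n}=\sum_{\sigma\in S_{2n}}\mathrm{sgn}(\sigma)\,x_{\sigma(1)}\cdots x_{\sigma(2n)}$ for the standard polynomial of degree $2n$. A preliminary reduction lets me assume $\mathrm{char}\,F=0$: the assertion that $s_{2n}$ vanishes on $M_n(F)$ says exactly that a fixed family of $n^{2}$ polynomials with integer coefficients, in the $2n\cdot n^{2}$ matrix-entry variables, is identically zero; hence it suffices to prove the identity for $F=\mathbb{Q}$, since then those polynomials have all coefficients $0$ and so vanish over every field.

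For the identity, let $E=\bigoplus_{k=0}^{2n}\Lambda^{k}$ be the exterior (Grassmann) algebra on anticommuting generators $\xi_{1},\dots,\xi_{2n}$; its even part $E_{0}=\bigoplus_{k}\Lambda^{2k}$ is commutative and lies in the centre of $E$. Given $A_{1},\dots,A_{2n}\in M_n(F)$, set $B=\sum_{i=1}^{2n}\xi_{i}A_{i}\in M_n(E)$. On expanding $B^{2n}$ the matrix entries (central, of degree $0$) may be pulled past the $\xi$'s, and $\xi_{i_{1}}\cdots\xi_{i_{2n}}$ is nonzero only when $(i_{1},\dots,i_{2n})=(\sigma(1),\dots,\sigma(2n))$ for some $\sigma\in S_{2n}$, where it equals $\mathrm{sgn}(\sigma)\,\xi_{1}\cdots\xi_{2n}$; hence
\[ B^{2n}\;=\;s_{2n}(A_{1},\dots,A_{2n})\cdot\xi_{1}\cdots\xi_{2n}. \]
Since $\xi_{1}\cdots\xi_{2n}$ spans the one-dimensional space $\Lambda^{2n}$, multiplication by it is injective on $M_n(F)\subseteq M_n(E)$, so it suffices to prove $B^{2n}=0$. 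The crux is that $B^{2}$ has all its entries in $\Lambda^{2}\subseteq E_{0}$, so $B^{2}\in M_n(E_{0})$ is a matrix over a \emph{commutative} ring, to which the Cayley--Hamilton theorem applies. I claim its characteristic polynomial is $t^{n}$; granting this, Cayley--Hamilton yields $(B^{2})^{n}=0$, i.e.\ $B^{2n}=0$, which proves the identity. For the claim it is enough, by Newton's identities (legitimate because $\mathrm{char}\,F=0$), to check $\mathrm{tr}\big((B^{2})^{k}\big)=0$ for $1\le k\le n$. Now $\mathrm{tr}(B^{2k})=\sum_{(i_{1},\dots,i_{2k})}\xi_{i_{1}}\cdots\xi_{i_{2k}}\,\mathrm{tr}(A_{i_{1}}\cdots A_{i_{2k}})$: the monomials with a repeated index already vanish, while the cyclic shift $(i_{1},\dots,i_{2k})\mapsto(i_{2},\dots,i_{2k},i_{1})$ fixes each trace by cyclicity but multiplies the corresponding Grassmann monomial by $(-1)^{2k-1}=-1$, so the remaining index tuples fall into orbits of size $2k$ on which the sign-alternating terms sum to zero.

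For the last clause, that $M_n(F)$ satisfies no identity of degree $<2n$: since $F$ is infinite, multilinearisation reduces this to ruling out nonzero \emph{multilinear} identities $f=\sum_{\sigma\in S_{d}}c_{\sigma}\,x_{\sigma(1)}\cdots x_{\sigma(d)}$ with $d<2n$. Relabelling the variables so that $c_{\mathrm{id}}\ne 0$, substitute the ``staircase'' of matrix units $x_{2i-1}=E_{ii}$, $x_{2i}=E_{i,i+1}$; by $E_{pq}E_{rs}=\delta_{qr}E_{ps}$ every summand with $\sigma\ne\mathrm{id}$ collapses to $0$, while the identity summand evaluates to a nonzero matrix unit, and for $d\le 2n-1$ the word $E_{11}E_{12}E_{22}E_{23}\cdots$ of length $d$ uses only indices $\le n$, so the substitution lies in $M_n(F)$. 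Hence $f$ is not an identity.

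I expect the main obstacle to be the bookkeeping around characteristic and centrality, rather than any single difficult step. Newton's identities require inverting $1,\dots,n$, which is exactly why the reduction to $\mathrm{char}\,F=0$ is carried out first; and one has to verify carefully that $B^{2}$ genuinely has its entries in the even, central, commutative subalgebra $E_{0}$, since the ``matrix over a commutative ring'' toolkit---characteristic polynomial, Cayley--Hamilton, Newton's identities---is being invoked inside the genuinely noncommutative algebra $M_n(E)$. Once these points are settled, the cyclic-cancellation computation for the traces is immediate.
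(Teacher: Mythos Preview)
The paper does not prove this theorem; it is stated as a classical result with citations to Amitsur--Levitzki and to Cohn, and is then used as a tool. There is therefore no in-paper proof to compare against.

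Your argument is the standard Rosset exterior-algebra proof for the identity together with the staircase substitution of matrix units for the lower bound, and both parts are carried out correctly. The reduction to characteristic zero, the passage to $B^{2}\in M_n(E_{0})$, the use of Cayley--Hamilton over the commutative ring $E_{0}$, and the cyclic-cancellation argument for $\mathrm{tr}(B^{2k})$ are all sound; likewise the Eulerian-path reasoning that the staircase $E_{11},E_{12},E_{22},E_{23},\dots$ admits a unique nonzero ordering is the usual one.

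One point worth noting explicitly: the statement as printed in the paper omits the sign $\mathrm{sgn}(\sigma)$. That is a slip---without the alternation the polynomial is not an identity on $M_n(F)$ (already for $n=1$ it evaluates to $(2n)!\,x_{1}\cdots x_{2n}$). You have tacitly repaired this by defining $s_{2n}$ with the sign and proving the correct assertion, which is the appropriate move; it would be worth saying so in your write-up.
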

% We will use the first part of this theorem in the subsequent
% development.  We will also generalise the second part of this theorem
% to a notion of \emph{partially commutative polynomial identities}.

Given a finite set $X$ of non-commuting variables, the \emph{generic
  $n\times n$ matrix algebra} $F_n\langle X\rangle$ is defined as
follows.  For each variable $x\in X$ we introduce a family of
commuting indeterminates $\{ t^{(x)}_{ij} : 1 \leq i,j \leq n\}$ and
define $F_n\langle X\rangle$ to be the $F$-algebra of $n\times n$
matrices generated by the matrices $(t^{(x)}_{ij})$ for each $x \in X$.
Then $F_n\langle X\rangle$ has the following universal property: any
homomorphism from $F \langle X \rangle$ to a matrix algebra $M_n(R)$,
with $R$ an $F$-algebra, factors uniquely through the map $\Phi^X_n :
F \langle X \rangle \rightarrow F_n \langle X \rangle$ given by
$\Phi^X_n(x)= (t^{(x)}_{ij})$.  

Related to the map $\Phi^X_n$ we also define an $F$-algebra homomorphism
\[ \Psi^X_n : F \langle X \rangle \rightarrow M_n(F\langle t_{ij}^{(x)}
\mid x \in X, 1 \leq i,j \leq n \rangle) \]
by 
 \[ \Psi^X_n(x) = \left( 
\begin{array}{rlllc} 
              0 \; & t_{12}^{(x)}    &        & &\\
                   & \ddots              & \ddots & &\\
                   &                    & & t^{(x)}_{n-1,n} \\
                   &                    & &   0
\end{array}\right) \]
where the matrix on the right has zero entries everywhere but along
the superdiagonal.

\subsection{Partially Commutative Polynomial Identities}
In this section we introduce a notion of \emph{partially commutative
  polynomial identity}.  We first establish notation and recall some
relevant facts about tensor products of algebras.  
%%We refer the reader
%%to~\cite{Cohn02} for detailed background.

Write $A \otimes B$ for the tensor product of $F$-algebras $A$ and $B$,
and write $A^{\otimes k}$ for the $k$-fold tensor power of $A$.  If
$A$ is an algebra of $a\times a$ matrices and $B$ an algebra of
$b\times b$ matrices, then we identify the tensor product $A\otimes B$
with the algebra of $ab\times ab$ matrices spanned by the
matrices $M \otimes N$, $M=(m_{ij}) \in A$ and
$N=(n_{ij}) \in B$, where 
\[ M \otimes N = \left(
 \begin{array}{ccc}
  m_{11}N & \cdots & m_{1a}N \\ \vdots & & \vdots \\ m_{a1}N & \cdots
  & m_{aa} N
 \end{array}\right ) \]
In particular we have $F^{\otimes k}=F$.

A \emph{partially commuting set of variables} is a tuple
$\boldsymbol{X}=(X_1,\ldots,X_k)$, where the $X_i$ are disjoint sets.
Write $F \langle \boldsymbol{X} \rangle$ for the tensor product
$F\langle X_1\rangle \otimes \cdots \otimes F\langle X_k\rangle$.  We
think of $F \langle \boldsymbol{X} \rangle$ as a set of polynomials in
partially commuting variables.  Intuitively two variables $x, y \in
X_i$ do not commute, whereas $x\in X_i$ commutes with $y \in X_j$ if
$i\neq j$.  Note that if each $X_i$ is a singleton $\{x_i\}$ then $F
\langle \boldsymbol{X} \rangle$ is the familiar ring of polynomials in
commuting variables $x_1,\ldots,x_k$.  At the other extreme, if $k=1$
then we recover the non-commutative case.

An arbitrary element $f \in F\langle \boldsymbol{X} \rangle = F\langle
X_1\rangle \otimes \cdots \otimes F\langle X_k\rangle$ can be written
uniquely as a finite sum of distinct \emph{monomials}, where each
monomial is a tensor product of elements of $X_1^*$, $ X_2^*$,\ldots,
and $X_k^*$.  Formally, we can write
\begin{gather}
f = \sum_{i \in I} \alpha_i(m_{i,1} \otimes \cdots \otimes m_{i,k}) \, ,
\label{eq:normalform}
\end{gather}
where $\alpha_i \in F$ and $m_{i,j} \in X_j^*$ for each $i\in I$ and
$1\leq j \leq k$.  Thus we can identify $F \langle \boldsymbol{X}
\rangle$ with the free $F$-algebra over the product monoid $X_1^* \times
\ldots \times X_k^*$.

Define the \emph{degree} of a monomial $m_1 \otimes \ldots \otimes
m_k$ to be the total length $|m_1|+\ldots+|m_k|$ of its constituent
words.  The degree of a polynomial is the maximum of the degrees of
its constituent monomials.

Let $\boldsymbol{A}=(A_1,\ldots,A_k)$ be a $k$-tuple of $F$-algebras.
A \emph{valuation} of $F \langle \boldsymbol{X} \rangle$ in
$\boldsymbol{A}$ is a tuple of functions
$\boldsymbol{v}=(v_1,\ldots,v_k)$, where $v_i : X_i \rightarrow A_i$.
Each $v_i$ extends uniquely to an $F$-algebra homomorphism
$\widetilde{v_i} : F\langle X_i \rangle \rightarrow A_i$, and we
define the map $\widetilde{\boldsymbol{v}} : F \langle \boldsymbol{X}
\rangle \rightarrow A_1 \otimes \ldots \otimes A_k$ by
$\widetilde{v}=\widetilde{v_1} \otimes \ldots \otimes
\widetilde{v_k}$.  Often we will abuse terminology slightly and speak
of a valuation of $F \langle \boldsymbol{X} \rangle$ in $A_1 \otimes
\cdots \otimes A_k$.  Given $f \in F \langle \boldsymbol{X} \rangle$,
we say that $\boldsymbol{A}$ satisfies the partially commutative
identity $f=0$ if $\widetilde{\boldsymbol{v}}(f)=0$ for all valuations
$\boldsymbol{v}$.

Next we introduce two valuations that will play an important role in
the subsequent development.  Recall that given a set of non-commuting
variables $X$, we have a map $\Phi^X_n : F \langle X \rangle
\rightarrow F_n \langle X \rangle$ from the free $F$-algebra to the
generic $n$-dimensional matrix algebra.  We now define a valuation
\begin{gather}
\Phi^{\boldsymbol{X}}_n : F\langle \boldsymbol{X} \rangle \longrightarrow
F_n\langle X_1\rangle \otimes \cdots \otimes F_n\langle X_k\rangle 
\label{eq:psi}
\end{gather}
by $\Phi^{\boldsymbol{X}}_n = \Phi^{X_1}_{n} \otimes \cdots \otimes
\Phi_{n}^{X_k}$.    Likewise we
define
\begin{align*} \Psi^{\boldsymbol{X}}_n : F \langle \boldsymbol{X} \rangle
\longrightarrow
&  M_n(F\langle t_{ij}^{(x)}
\mid x \in X_1, 1 \leq i,j \leq n \rangle)  \otimes \cdots \\
& \otimes 
 M_n(F\langle t_{ij}^{(x)}
\mid x \in X_k, 1 \leq i,j \leq n \rangle)  
\end{align*}
by $\Psi_n^{\boldsymbol{X}} = \Psi_n^{X_1} \otimes \cdots \otimes \Psi_n^{X_k}$.
We will usually elide the superscript $\boldsymbol{X}$ from 
$\Phi^{\boldsymbol{X}}_n$ and  $\Psi_n^{\boldsymbol{X}}$ when it is clear from the 
context.

The following result generalises (part of) the Amitsur-Levitzki
theorem, by giving a lower bound on the degrees of partially
polynomial identities holding in tensor products of matrix algebras.

\begin{proposition}
Let $f \in F \langle \boldsymbol{X} \rangle$ and let $L$ be a field
extending $F$.  Then the following are equivalent: (i)~The partially
commutative identity $f=0$ holds in $M_n(L) \otimes_F \cdots_F \otimes
M_n(L)$; (ii)~$\Phi_n(f) = 0$.  Moreover, if $f$ has degree strictly
less than $n$ then (i) and (ii) are both equivalent to
(iii)~$\Psi_n(f) = 0$; and (iv)~$f$ is identically $0$ in $F \langle
\boldsymbol{X} \rangle$.
\label{prop:sc-AL}
\end{proposition}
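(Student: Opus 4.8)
The proof is a chain of implications. The natural cycle is $(iv) \Rightarrow (iii) \Rightarrow (ii) \Rightarrow (i) \Rightarrow (iv)$ for the low-degree case, with the observation that $(i) \Leftrightarrow (ii)$ holds unconditionally. Let me think about which implications are routine and which carry the content.

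$(ii) \Rightarrow (i)$: The generic matrix algebra $F_n\langle X_i\rangle$ is a subalgebra of $M_n(F[t^{(x)}_{ij}])$, a matrix algebra over a commutative ring; and $\Phi_n$ is by construction the "most general" valuation into a $k$-fold tensor product of $n\times n$ matrix algebras over commutative $F$-algebras. So if $\Phi_n(f)=0$, then for any valuation $\boldsymbol v$ of $F\langle\boldsymbol X\rangle$ in $M_n(L)\otimes\cdots\otimes M_n(L)$, the map $\widetilde{\boldsymbol v}$ factors through $\Phi_n$ by the universal property (applied tape-by-tape and then tensored), hence $\widetilde{\boldsymbol v}(f)=0$. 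So $(ii)\Rightarrow(i)$ is just the universal property.

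$(i) \Rightarrow (ii)$: This is the converse direction and it's slightly more subtle — one must exhibit a valuation in some $M_n(L)\otimes\cdots\otimes M_n(L)$ that "detects" $\Phi_n(f)$. The point is that $\Phi_n(f)$, if nonzero, is a matrix (in the tensor product of generic matrix algebras) whose entries are polynomials in the $t^{(x)}_{ij}$ that are not all zero; since $F$ is infinite one can substitute field elements from $F$ (or a finite extension — but actually $F$ infinite suffices) for the $t^{(x)}_{ij}$ to make some entry nonzero, yielding a valuation in $M_n(F)\otimes\cdots\otimes M_n(F) \subseteq M_n(L)\otimes\cdots\otimes M_n(L)$ on which $f$ does not vanish. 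So $(i)\Leftrightarrow(ii)$ is essentially bookkeeping plus "infinite field $\Rightarrow$ nonzero polynomial has a nonzero value."

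**The substantive part.** The real content is in the low-degree equivalences. $(iv)\Rightarrow(iii)$ is trivial since $\Psi_n$ is a homomorphism. $(iii)\Rightarrow(iv)$ — and, dually, the implication going from $(i)/(ii)$ back down to $(iv)$ — is where the degree hypothesis is used, and it is here that the Amitsur–Levitzki theorem (specifically its "no identity of degree $<2n$" half, or rather the strict-triangular refinement via $\Psi_n$) does the work. Write $f$ in the normal form $f=\sum_i\alpha_i(m_{i,1}\otimes\cdots\otimes m_{i,k})$ with the monomials $m_{i,1}\otimes\cdots\otimes m_{i,k}$ distinct and of degree $<n$. Apply $\Psi_n$: each $\Psi_n^{X_j}(x)$ is the strictly-upper-triangular "shift" matrix with distinct indeterminate entries $t^{(x)}_{12},\ldots,t^{(x)}_{n-1,n}$ on the superdiagonal. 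A product of fewer than $n$ such shift matrices (for variables drawn from the same $X_j$) is again strictly upper triangular and, crucially, its entries are distinct monomials in the $t$'s that record the exact sequence of variables multiplied — i.e. the word $m_{i,j}\in X_j^*$ is reconstructible from the nonzero entries of $\Psi_n^{X_j}(m_{i,j})$, provided $|m_{i,j}|<n$ so that the product doesn't "run off" the top-right corner and vanish. Tensoring over $j$, the monomial $m_{i,1}\otimes\cdots\otimes m_{i,k}$ of total degree $<n$ is sent to a matrix supported on entries indexed by tuples of superdiagonal paths, and these supports are disjoint for distinct monomials. Hence $\Psi_n(f)=0$ forces each $\alpha_i=0$, i.e. $f=0$ in $F\langle\boldsymbol X\rangle$. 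Finally $(ii)\Rightarrow(iv)$ (closing the cycle in the low-degree regime) follows because $\Psi_n$ factors through $\Phi_n$: the strictly-triangular matrices $\Psi_n^{X_j}(x)$ are a particular specialisation of the generic matrices $\Phi_n^{X_j}(x)$ (set all entries off the superdiagonal to $0$, and the superdiagonal entries to the indeterminates $t^{(x)}_{i,i+1}$), so there is a homomorphism $F_n\langle X_j\rangle\to M_n(F[t^{(x)}_{i,i+1}])$ carrying $\Phi_n^{X_j}(x)$ to $\Psi_n^{X_j}(x)$; tensoring, $\Phi_n(f)=0$ implies $\Psi_n(f)=0$, which implies $f=0$ by the previous step.

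**Main obstacle.** The one place to be careful — and where I'd spend the most effort — is the claim that, for monomials of degree $<n$, the map $\Psi_n$ is "injective on supports": I need that distinct partially-commutative monomials $m_{i,1}\otimes\cdots\otimes m_{i,k}$ land on disjoint sets of matrix entries after applying $\Psi_n$, and that no cancellation occurs within a single monomial's image. This is exactly the degree-$<n$ bookkeeping with the shift matrices: a word $w=x_{a_1}\cdots x_{a_\ell}\in X_j^*$ of length $\ell$ maps under $\Psi_n^{X_j}$ to a matrix whose only nonzero entry is in position $(1,\ell+1)$ with value $t^{(x_{a_1})}_{1,2}t^{(x_{a_2})}_{2,3}\cdots t^{(x_{a_\ell})}_{\ell,\ell+1}$ — wait, that's only true if the product starts at the top; in general $\Psi_n^{X_j}(w)$ is supported on positions $(p,p+\ell)$ for $1\le p\le n-\ell$, each with the analogous monomial. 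Reconstruction of $w$ from any one of these nonzero entries is immediate since the indeterminates $t^{(x)}_{p,p+1}$ all carry the variable name $x$ as a superscript. Across tapes the matrix entries live in disjoint polynomial rings, so a tensor monomial is determined by the tuple of its per-tape reconstructed words. The degree bound $|m_{i,1}|+\cdots+|m_{i,k}|<n$ guarantees in particular that each $|m_{i,j}|<n$, so no factor vanishes — this is the only use of the hypothesis, and it is what prevents the kind of cancellation that Amitsur–Levitzki shows is unavoidable once the degree reaches $2n$. Once this support-disjointness lemma is nailed down, the rest of the proposition is formal.
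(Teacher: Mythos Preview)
Your proof is correct and essentially the same as the paper's: the unconditional equivalence $(i)\Leftrightarrow(ii)$ via the universal property of the generic matrix algebra together with the infinitude of $F$; the implication $(ii)\Rightarrow(iii)$ by specialising the generic matrices to the superdiagonal shift matrices; and the key step $(iii)\Rightarrow(iv)$ by observing that, for a monomial of degree $<n$, the entry of $\Psi_n(m_1\otimes\cdots\otimes m_k)$ at position $((1,\ldots,1),(l_1{+}1,\ldots,l_k{+}1))$ is a single monomial in the $t^{(x)}_{ij}$ that uniquely encodes $(m_1,\ldots,m_k)$. (Your initially announced cycle $(iv)\Rightarrow(iii)\Rightarrow(ii)\Rightarrow(i)\Rightarrow(iv)$ is not quite what you end up executing---you never argue $(iii)\Rightarrow(ii)$ directly---but the implications you do prove close the loop, and match the paper.)
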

\begin{proof}
  The implication (ii) $\Rightarrow$ (i) follows from the fact that
  any valuation from $F \langle \boldsymbol{X} \rangle$ to $M_n(L)
  \otimes_F \cdots_F \otimes M_n(L)$ factors through $\Phi_n$.  To
  see that (i) $\Rightarrow$ (ii), observe that $\Phi_n(f)$ is an
  $n^k\times n^k$ matrix in which each entry is a polynomial in the
  commuting variables $t^{(x)}_{ij}$.  Condition (i) implies in particular
  that each such polynomial evaluates to $0$ for all valuations of its
  variables in $F$.  Since $F$ is an infinite field, it must be that
  each such polynomial is identically zero, i.e., (ii) holds.

  The implications (ii) $\Rightarrow$ (iii) and (iv) $\Rightarrow$ (i)
  are both straightforward, even without the degree restriction on $f$.

Finally we show that (iii) $\Rightarrow$ (iv).  Let $m_1 \otimes
\ldots \otimes m_k$ be a monomial in $F \langle \boldsymbol{X}
\rangle$, where $m_i = m_{i,1} \ldots m_{i,l_i} \in X_i^*$ has length
$l_i < n$.  Then $\Psi_n(m_1 \otimes \cdots \otimes m_k)$ is an $n^k
\times n^k$ matrix whose first row has a single non-zero entry, which
is the monomial
\begin{gather}
t_{12}^{(m_{1,1})}\ldots t_{l_1,l_1+1}^{(m_{1,l_1})} \ldots
t_{12}^{(m_{k,1})}\ldots t_{l_k,l_k+1}^{(m_{k,l_k})} 
\label{eq:psi2}
\end{gather}
at index $(1,\ldots,1),(l_1+1,\ldots,l_k+1)$.

It follows that $\Psi_n$ maps the set of monomials in $F \langle
\boldsymbol{X} \rangle$ of degree less than $n$ injectively into a 
linearly independent set of matrices.  Condition (iv)
immediately follows.  \qed
\end{proof}
The hypothesis that $f$ have degree less than $n$ in
Proposition~\ref{prop:sc-AL} can be weakened somewhat, but is
sufficient for our purposes.

\subsection{Division Rings and Ore Domains}
A ring $R$ with no zero divisors is a \emph{domain}.  If moreover each
non-zero element of $R$ has a two-sided multiplicative inverse, then we
say that $R$ is a \emph{division ring} (also called a
\emph{skew field}).  A domain $R$ is a \emph{(right) Ore domain}
if for all $a,b \in R \setminus \{0\}$, $aR \cap bR \neq 0$.  The
significance of this notion is that an Ore domain can be embedded in a
division ring of fractions~\cite[Corollary 7.1.6]{Cohn03}, something
that need not hold for an arbitrary domain.  If the Ore condition
fails then it can easily be shown that the subalgebra of $R$ generated
by $a$ and $b$ is free on $a$ and $b$.  It follows that a domain $R$
that satisfies some polynomial identity is an Ore
domain~\cite[Corollary 7.5.2]{Cohn03}.

\begin{proposition}
  The tensor product of generic matrix algebras
  $F_n\langle X_1\rangle \otimes \cdots \otimes F_n\langle X_k\rangle$
  is an Ore domain for each $n \in \mathbb{N}$.
\label{prop:ore}
\end{proposition}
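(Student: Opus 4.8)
The plan is to prove the two required properties of $R := F_n\langle X_1\rangle \otimes \cdots \otimes F_n\langle X_k\rangle$ separately: first that $R$ is a domain, and then that it satisfies a polynomial identity, from which the Ore property follows by the cited \cite[Corollary 7.5.2]{Cohn03}. For the polynomial identity, note that each factor $F_n\langle X_i\rangle$ is a subalgebra of $M_n(L_i)$ for a suitable (commutative, hence infinite-field-extensible) ring $L_i$ of rational functions in the indeterminates $t^{(x)}_{jk}$; passing to the fraction field makes it a subalgebra of $M_n(K_i)$ for a field $K_i \supseteq F$. Hence $R$ embeds in $M_n(K_1)\otimes_F \cdots \otimes_F M_n(K_k)$, which in turn embeds in $M_{n^k}(K)$ over a common extension field $K$ of all the $K_i$. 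By the Amitsur--Levitzki theorem (Theorem~\ref{thm:AL}), $M_{n^k}(K)$ satisfies the standard polynomial identity of degree $2n^k$, and therefore so does every subalgebra, in particular $R$.

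The substantive part is showing $R$ has no zero divisors. Here I would proceed by induction on $k$. For $k=1$, $F_n\langle X_1\rangle$ is a well-known domain: it is the classical algebra of generic matrices, and its being a domain is a standard fact (one realises it inside $M_n(K_1)$ and argues via the trace form, or cites the standard result that the ring of generic matrices is a domain whose ring of fractions is a division algebra of degree $n$ — see \cite{Cohn03}). For the inductive step, the key observation is that $F_n\langle X_1\rangle \otimes \cdots \otimes F_n\langle X_k\rangle \cong \bigl(F_n\langle X_1\rangle \otimes \cdots \otimes F_n\langle X_{k-1}\rangle\bigr) \otimes F_n\langle X_k\rangle$, and that a tensor product $S \otimes_F T$ of two $F$-domains need not be a domain in general — so one must exploit structure. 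The route I would take is: by the inductive hypothesis $S := F_n\langle X_1\rangle \otimes \cdots \otimes F_n\langle X_{k-1}\rangle$ is an Ore domain satisfying a polynomial identity, hence it embeds in its division ring of fractions $D_S$; then $S \otimes_F F_n\langle X_k\rangle \hookrightarrow D_S \otimes_F F_n\langle X_k\rangle$, and it suffices to show the latter is a domain. Now $D_S$ is finite-dimensional over its centre $Z(D_S)$, and extending scalars, $D_S \otimes_F F_n\langle X_k\rangle$ is an algebra over $Z(D_S)$; one reduces to showing $D_S \otimes_F F_n\langle X_k\rangle$ has no zero divisors by again placing $F_n\langle X_k\rangle$ inside $M_n$ of a purely transcendental extension and checking that tensoring a division ring with the generic matrix algebra over a field of transcendentals preserves the domain property — concretely, that the relevant matrix entries remain algebraically independent over the (finite-dimensional, hence "small") division ring $D_S$, so that a nonzero product of matrices over $D_S(t^{(x)}_{jk})$ cannot vanish.

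The main obstacle I anticipate is precisely this last point: tensor products of domains over a field are genuinely not always domains, so the proof cannot be purely formal and must use that the indeterminates defining each $F_n\langle X_i\rangle$ are fresh commuting transcendentals that remain transcendental (and algebraically independent) even after adjoining the previously constructed division ring of fractions. Making that rigorous is the crux — I would phrase it as: if $0 \ne p \in D \otimes_F F[t^{(x)}_{jk}]$ maps to $0$ under the matrix representation $\Phi_n$ on the last block, then clearing denominators and comparing a suitable monomial via the nilpotent "superdiagonal" representation $\Psi_n$ of Proposition~\ref{prop:sc-AL}-style reasoning exhibits a nonzero coefficient, a contradiction. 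A cleaner alternative, if available, is to quote a known theorem that the tensor product over $F$ of the division rings of fractions of generic matrix algebras is a division ring (these are central simple algebras over purely transcendental extensions with algebraically independent centres, so their tensor product is a division algebra); if such a statement is citable from \cite{Cohn03} it short-circuits the induction, and $R$ as a subring of a division ring is immediately a domain, with the polynomial identity giving Ore.
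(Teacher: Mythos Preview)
Your treatment of the polynomial-identity half is correct and matches the paper: embed $R$ in $M_{n^k}$ of a field and invoke Amitsur--Levitzki, then cite \cite[Corollary 7.5.2]{Cohn03} to get the Ore condition once $R$ is known to be a domain.

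The gap is in the domain half. Your inductive step reduces to showing that $D_S \otimes_F F_n\langle X_k\rangle$ is a domain, where $D_S$ is the division ring of fractions of the $(k-1)$-fold tensor product. You correctly flag this as the crux, but neither of your two suggested routes is actually carried out. The first (``algebraic independence of the $t^{(x)}_{ij}$ over $D_S$'') is not a proof: the $t^{(x)}_{ij}$ are certainly transcendental over $D_S$, but that only tells you $D_S[t^{(x)}_{ij}]$ is a domain, not that the subring of $M_n(D_S[t^{(x)}_{ij}])$ generated by $D_S$ and the generic matrices is one. Invoking $\Psi_n$ here does not help, since $\Psi_n$ only separates monomials of bounded degree and says nothing about products in the full algebra. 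The second route (``cite that the tensor product of the generic division rings is a division ring'') is essentially the statement to be proved, and is not available off the shelf in \cite{Cohn03}; indeed the centres of these division rings are not even rational over $F$ in general, so the naive linear-disjointness heuristic is unreliable.

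The paper avoids the induction entirely. It takes from \cite[Proposition 1.1]{Saltman} a single explicit family of degree-$n$ central simple $F$-algebras $D_1,\ldots,D_k$ (crossed products built from a common Galois extension and independent indeterminates $z_1,\ldots,z_k$) whose tensor product $D_1\otimes_F\cdots\otimes_F D_k$ is, by construction, a localisation of an iterated skew polynomial ring and hence a domain. The transfer to $F_n\langle X_1\rangle\otimes\cdots\otimes F_n\langle X_k\rangle$ then goes via Proposition~\ref{prop:sc-AL}: since each $D_i\otimes_F L\cong M_n(L)$ for algebraically closed $L$, a partially commutative identity $f=0$ holds in $D_1\otimes\cdots\otimes D_k$ iff $\Phi_n(f)=0$. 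So if $\Phi_n(fg)=0$ then $fg=0$ in the $D_i$-tensor product; being a domain, the latter then satisfies $fhg=0$ for \emph{every} $h$, hence so does $M_n(L)^{\otimes k}$, and taking $h$ to range over matrix units forces $f=0$ or $g=0$ there, i.e.\ $\Phi_n(f)=0$ or $\Phi_n(g)=0$. The point is that the hard existence statement (``some $k$-fold tensor product of degree-$n$ algebras is a domain'') is outsourced to a concrete construction rather than bootstrapped inductively from the generic case.
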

\begin{proof}[sketch]
  We give a proof sketch here, deferring the details to
  Appendix~\ref{sec:appendix}.

By the Amitsur-Levitzki theorem, $F_n\langle X_1 \rangle \otimes
\cdots \otimes F_n \langle X_n \rangle$ satisfies a polynomial
identity.  Thus it suffices to show that $F_n\langle X_1 \rangle
\otimes \cdots \otimes F_n \langle X_n \rangle$ is a domain for each
$n$.  Now it is shown in~\cite[Proposition 7.7.2]{Cohn03} that
$F_n\langle X\rangle$ is a domain for each $n$ and set of variables
$X$.  While the tensor product of domains need not be a domain (e.g.,
$\mathbb{C} \otimes_{\mathbb{R}} \mathbb{C} \cong \mathbb{C} \times
\mathbb{C}$), the proof in~\cite{Cohn03} can be adapted \emph{mutatis
  mutandis} to show that $F \langle X_1 \rangle \otimes \cdots \otimes
F \langle X_k \rangle$ is also a domain.

To prove the latter, it suffices to find central simple $F$-algebras
$D_1,\ldots,D_k$, each of degree $n$, such that the $k$-fold tensor
product $D\otimes_F \cdots \otimes_F D$ is a domain.  Such an example
can be found, e.g., in~\cite[Proposition 1.1]{Saltman}.  Then, using
the fact that $D \otimes_F L \cong M_n(L)$ for any algebraically
closed extension field of $F$, one can infer that $F_n \langle X_1
\rangle \otimes \cdots \otimes F_n \langle X_k \rangle$ is also a
domain.  \qed
\end{proof}

\section{Multitape Automata}
\label{sec:multitape}
Let $\Sigma=(\Sigma_1,\ldots,\Sigma_k)$ be a tuple of finite
alphabets.  We denote by $S$ the product monoid $\Sigma_1^* \times
\cdots \times \Sigma_k^*$.  Define the length of $s=(w_1,\ldots,w_k)
\in S$ to be $|s|=|w_1|+\ldots+|w_k|$ and write $S^{(l)}$ for the set
of elements of $S$ of length $l$.  A \emph{multitape automaton} is a
tuple $A=(\Sigma,Q,E,Q_0,F)$, where $Q$ is a set of \emph{states}, $E
\subseteq Q\times S^{(1)} \times Q$ is a set of \emph{edges}, $Q_0
\subseteq Q$ is a set of \emph{initial states}, and $Q_f \subseteq Q$ is
a set of \emph{final states}.  A \emph{run} of $A$ from state $q_0$ to
state $q_m$ is a finite sequence of edges $\rho=e_1e_2 \ldots e_m$
such that $e_i = (q_{i-1},s_i,q_i)$.  The \emph{label} of $\rho$ is
the product $s_1s_2\ldots s_m \in S$.  Define the \emph{multiplicity}
$A(s)$ of an input $s \in S$ to be the number of runs with label $s$
such that $q_0\in Q_0$ and $q_m \in Q_f$.  An automaton is
\emph{deterministic} if each state reads letters from a single tape
and has a single transition for every input letter.  Thus a
deterministic automaton has a single run on each input $s\in S$.

\subsection{Multiplicity Equivalence}
We say that two automata $A$ and $B$ over the same alphabet are
\emph{multiplicity equivalent} if $A(s)=B(s)$ for all $s \in S$.  The
following result implies that multiplicity equivalence of multitape
automata is decidable.

\begin{theorem}[Harju and Karhum\"{a}ki]
Given automata $A$ and $B$ with $n$ states in total, $A$
and $B$ are equivalent if and only if $A(s)=B(s)$ for all $s\in S$ of
length at most $n-1$.
\label{thm:harju}
\end{theorem}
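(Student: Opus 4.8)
The plan is to recast multiplicity equivalence of $A$ and $B$ as the vanishing of a family of polynomials in $F\langle\boldsymbol{X}\rangle$ (taking $X_i=\Sigma_i$), and then to pin down the lengths at which a difference can appear by combining Proposition~\ref{prop:ore}, which supplies a division ring, with Proposition~\ref{prop:sc-AL}, which lets us descend back to $F\langle\boldsymbol{X}\rangle$. Write $\bar{s}$ for the basis monomial of $F\langle\boldsymbol{X}\rangle$ corresponding to $s\in S$. The first step is encoding. Form the difference automaton on $Q_A\sqcup Q_B$ (which has $n$ states) and package it as a single matrix $\mathbf{M}\in M_n(F\langle\boldsymbol{X}\rangle)$ whose $(p,q)$ entry is $\sum_{(p,s,q)\in E_A\cup E_B}\bar{s}$, with row vector $\lambda$ the indicator of $Q_0^A\cup Q_0^B$ and column vector $\mu$ the indicator of $Q_f^A$ minus that of $Q_f^B$. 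Because variables on different tapes commute in $F\langle\boldsymbol{X}\rangle$, expanding $\mathbf{M}^m$ and grouping runs by their label gives
\[
  f_m\ :=\ \lambda^{\top}\mathbf{M}^{m}\mu\ =\ \sum_{s\in S^{(m)}}\bigl(A(s)-B(s)\bigr)\,\bar{s}\qquad(m\ge 0).
\]
Since the $\bar{s}$ are $F$-linearly independent, $A$ and $B$ are equivalent iff $f_m=0$ for all $m$, whereas the hypothesis $A(s)=B(s)$ for $|s|\le n-1$ says exactly $f_0=\cdots=f_{n-1}=0$; so it suffices to prove that this forces $f_m=0$ for all $m$.

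The second step is a dimension argument over a division ring. As the free algebra $F\langle\boldsymbol{X}\rangle$ is not embeddable in a division ring, we cannot argue there directly, so we fix an arbitrary $N\in\mathbb{N}$ and apply $\Phi_N$ (with matrix dimension $N$), obtaining $\mathbf{M}':=\Phi_N(\mathbf{M})\in M_n(G_N)$ with $G_N:=F_N\langle X_1\rangle\otimes\cdots\otimes F_N\langle X_k\rangle$; by Proposition~\ref{prop:ore} this $G_N$ is an Ore domain, hence embeds in a division ring $D_N$, and $\Phi_N(f_m)=\lambda^{\top}(\mathbf{M}')^m\mu$ since $\Phi_N$ is $F$-linear. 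Viewing $D_N^{1\times n}$ as a left $D_N$-vector space, right multiplication by the matrix $\mathbf{M}'$ is left $D_N$-linear, so the subspaces $V_\ell:=\mathrm{span}_{D_N}\{\lambda^{\top}(\mathbf{M}')^j:0\le j\le\ell\}$ form an ascending chain with $V_{\ell+1}=V_\ell+V_\ell\mathbf{M}'$; hence if two consecutive terms coincide the chain is constant thereafter. As $\dim_{D_N}V_\ell\le n$ and strictly increases until the chain stabilises (the case $\lambda=0$ being trivial), we get $V_{n-1}=V_\ell$ for all $\ell\ge n-1$. Thus every $\lambda^{\top}(\mathbf{M}')^m$ is a left $D_N$-combination of $\lambda^{\top}(\mathbf{M}')^0,\dots,\lambda^{\top}(\mathbf{M}')^{n-1}$; multiplying on the right by $\mu$ and using $f_0=\cdots=f_{n-1}=0$ gives $\Phi_N(f_m)=0$ for all $m$.

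The final step is descent. The key point is that the bound $n$ in the previous step is the number of states and does not depend on $N$. So, given any $m$, we run the second step with $N:=m+1$: then $f_m$ has degree $m<N$, and Proposition~\ref{prop:sc-AL} (with matrix dimension $N$), asserting that $\Phi_N(f)=0$ is equivalent to $f=0$ for $f$ of degree less than $N$, yields $f_m=0$. As $m$ was arbitrary, $f_m=0$ for all $m$, i.e.\ $A(s)=B(s)$ for all $s\in S$; the converse implication is trivial. I expect the main obstacle to be exactly this juggling of the two roles of ``$n$'': a generic $n\times n$ matrix algebra detects only polynomial identities of degree below $n$, whereas the $f_m$ have unbounded degree, so a single application of $\Phi_n$ cannot by itself recover equivalence---the resolution being that the Ore/division-ring dimension argument collapses the whole sequence $(\Phi_N(f_m))_m$ onto its first $n$ entries regardless of $N$. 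A smaller point to get right is that right multiplication by a matrix over $D_N$ is genuinely left $D_N$-linear on row vectors (not the same operation as left scalar multiplication), which is what makes the ascending-chain argument legitimate over a noncommutative coefficient ring.
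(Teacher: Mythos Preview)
Your proposal is correct and follows essentially the same route as the paper's proof of Proposition~\ref{prop:main} (from which Theorem~\ref{thm:harju} is immediately derived): encode the difference automaton over $F\langle\boldsymbol{\Sigma}\rangle$, push through $\Phi_N$ into a tensor product of generic matrix algebras, use Proposition~\ref{prop:ore} to embed in a division ring, run the standard dimension/ascending-chain argument there, and descend via Proposition~\ref{prop:sc-AL}. The only cosmetic differences are that you spell out the ascending-chain argument where the paper cites the standard result on weighted automata over a division ring (Eilenberg, Sch\"utzenberger), and you take $N=m+1$ rather than $N=m$, which more cleanly meets the strict degree bound in Proposition~\ref{prop:sc-AL}.
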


Theorem~\ref{thm:harju} immediately yields a \textbf{co-NP} bound for
checking language equivalence of deterministic multitape automata.
Given two inequivalent automata $A$ and $B$, a distinguishing input
$s$ can be guessed, and it can be verified in polynomial time that
only one of $A$ and $B$ accepts $s$.  A similar idea also gives a
\textbf{co-NP} bound for multiplicity equivalence in case the number
of tapes is fixed.  In general we note however that the
\emph{evaluation problem}---given an automaton $A$ and input $s$,
compute $A(s)$---is $\# \mathbf{P}$-complete.  Thus it is not clear
that the \textbf{co-NP} upper bound applies to the multiplicity
equivalence problem without bounding the number of tapes.

\begin{proposition}
The evaluation problem for multitape automata is $\#
\mathbf{P}$-complete.
\end{proposition}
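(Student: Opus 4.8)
The plan is to show membership in $\#\mathbf{P}$ and $\#\mathbf{P}$-hardness separately. Membership is routine: given an automaton $A=(\Sigma,Q,E,Q_0,Q_f)$ and an input $s=(w_1,\ldots,w_k)\in S$, a run of $A$ on $s$ is an interleaving of the letters of $w_1,\ldots,w_k$ together with a choice of edge at each step. Such a run has length $|s|=|w_1|+\cdots+|w_k|$, which is polynomial in the input size, and can be guessed and verified by a nondeterministic polynomial-time machine: one nondeterministically picks, step by step, which tape to advance and which matching edge to take, maintaining the current state and the position in each $w_i$, accepting iff the run starts in $Q_0$, ends in $Q_f$, and consumes all of each $w_i$. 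The number of accepting paths of this machine is exactly $A(s)$, so the evaluation problem is in $\#\mathbf{P}$.

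For hardness I would reduce from a known $\#\mathbf{P}$-complete problem that naturally involves counting interleavings or matchings. The cleanest route is to reduce from counting the number of ways to interleave, which is essentially computing a permanent or counting perfect matchings in a bipartite graph; alternatively one can reduce directly from counting satisfying assignments. A convenient target is the permanent of a $0/1$ matrix, i.e.\ counting perfect matchings of a bipartite graph $G=(U\cup V,E)$ with $|U|=|V|=n$. The idea is to use $k=n$ tapes (so the number of tapes is genuinely unbounded, matching the statement's caveat), one tape per vertex of $U$, each carrying a one-letter word naming that vertex; a run of the automaton then must, in some order, ``match'' each $U$-vertex to a distinct $V$-vertex, and the automaton is designed so that a run succeeds exactly when the chosen assignment is a permutation realising a perfect matching of $G$. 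Concretely, the state records which elements of $V$ have been used so far (a subset of $[n]$), reading the single letter on tape $i$ triggers a nondeterministic choice of an unused neighbour $v\in V$ of $u_i$ in $G$, the state updates the used-set, and the final state is the full set $[n]$. Then $A(s)$ with $s$ the tuple of all $U$-vertex names equals $\mathrm{per}$ of the biadjacency matrix of $G$, which is $\#\mathbf{P}$-hard by Valiant's theorem.

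The state space here has size $2^n$, which is exponential, so I would instead encode the ``used-set'' implicitly: have the automaton, on tape $i$, simply guess a target index $\pi(i)\in[n]$ subject to $(u_i,v_{\pi(i)})\in E$, and accept every such guess, so that the raw count of runs is $\prod_i \deg(u_i)$ — not yet the permanent. To recover injectivity of $\pi$ without an exponential state space, I would instead reduce from a counting problem where the constraint is local, e.g.\ $\#$DNF or, better, from counting accepting interleavings where one tape per clause is used and a polynomial-size automaton checks each clause independently; a standard choice is to reduce from the problem of counting the number of tuples $(a_1,\ldots,a_k)$ with $a_i$ drawn from an alphabet that satisfy a conjunction of constraints each involving only a bounded number of tapes, which is $\#\mathbf{P}$-hard (e.g.\ via $\#$2SAT or $\#$Monotone-2SAT). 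The automaton reads tape $i$, stores $a_i$ in its state only for as long as needed to check the (constantly many) constraints touching tape $i$, and the multiplicity counts exactly the satisfying tuples.

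The main obstacle is precisely this tension: getting $\#\mathbf{P}$-hardness forces the use of unboundedly many tapes (consistent with the statement), but a naive simulation of a hard counting problem wants to remember a growing amount of information across tapes, blowing up the state space. So the crux of the proof is to choose a source problem whose constraints are sufficiently local that each can be verified by a polynomial-size multitape automaton with only a bounded amount of state retained per tape, while the interleaving structure of the $k$ tapes supplies exactly the combinatorial freedom being counted. I expect $\#$Monotone-2SAT (known $\#\mathbf{P}$-complete) to work: one tape per variable carrying its truth value, plus the automaton checking each 2-clause as the relevant two variables are read, so that $A(s)$ equals the number of satisfying assignments. Verifying the reduction is correct and polynomial-time computable is then a routine check.
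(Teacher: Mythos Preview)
Your membership argument is fine and matches the paper's.

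The hardness sketch, however, has a real gap. In your final proposal (reduce from \#Monotone-2SAT with ``one tape per variable carrying its truth value''), it is not clear what is on the tapes and what is being counted. If the truth value of $x_i$ literally sits on tape $i$ as part of the fixed input $s$, then $s$ already determines a single assignment and $A(s)$ cannot count satisfying assignments. If instead each tape carries a single trigger letter and the automaton \emph{guesses} the value of $x_i$ when it reads that letter, then to check a clause $(x_i\lor x_j)$ you must still hold the guessed value of $x_i$ in the finite control until tape $j$ is read. The number of values that must be held simultaneously is governed by the pathwidth of the clause graph, which is unbounded for general \#Monotone-2SAT instances; so you are back to an exponential state space. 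The locality you invoke (``bounded amount of state retained per tape'') is exactly what general 2SAT instances do not give you. You also have to be careful that each satisfying assignment corresponds to exactly one accepting run, which your sketch does not address (if the automaton has any freedom in the order it reads tapes, you overcount).

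The paper sidesteps the whole issue with a neat trick that avoids remembering truth values in the state at all. Reduce from \#SAT on $k$ variables, use $k$ tapes with alphabet $\{0,1\}$, and take the fixed input $s=((01)^n,\ldots,(01)^n)$ with $n$ larger than the number of occurrences of any variable. The run begins by nondeterministically reading a single $0$ from a chosen subset of tapes (this subset encodes the false variables as a \emph{phase shift} on those tapes). Thereafter the automaton walks through the formula; each time it needs the value of variable $i$ it guesses it and reads two letters from tape $i$: $01$ if it guessed true, $10$ if it guessed false. A final cleanup reads a trailing $1$ from the false tapes. The point is that consistency of all the local guesses is enforced \emph{for free} by the requirement that the run consume exactly $s$: a single inconsistent guess desynchronises tape $i$ and the run cannot finish. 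No truth values are stored in the finite control, so the automaton has polynomial size. This phase-shift idea is the missing ingredient in your argument.
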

\begin{proof} 
  Membership in $\# \mathbf{P}$ follows from the observation that
  a non-deterministic polynomial-time algorithm can enumerate all
  possible runs of an automaton $A$ on an input $s\in S$.

  The proof of $\# \mathbf{P}$-hardness is by reduction from \#SAT, the
  problem of counting the number of satisfying assignments of a
  propositional formula.  Consider such a formula $\varphi$ with $k$
  variables, each with fewer than $n$ occurrences.  We define a
  $k$-tape automaton $A$, with each tape having alphabet $\{0,1\}$,
  and consider as input the $k$-tuple $s=((01)^n,\ldots,(01)^n)$.  The
  automaton $A$ is constructed such that its runs on input $s$ are in
  one-to-one correspondence with satisfying assignments of $\varphi$.
  Each run starts with the automaton reading the symbol $0$ from a
  non-deterministically chosen subset of its tapes, corresponding to
  the set of false variables.  Thereafter it evaluates the formula
  $\varphi$ by repeatedly guessing truth values of the propositional
  variables.  If the $i$-th variable is guessed to be true then the
  automaton reads $01$ from the $i$-th tape; otherwise it reads $10$
  from the $i$-th tape.  The final step is to read the symbol $1$ from
  a non-deterministically chosen subset of the input tapes---again
  corresponding to the set of false variables.  The consistency of the
  guesses is ensured by the requirement that the automaton have read
  $s$ by the end of the computation.\qed
\end{proof}

\subsection{Decidability}
We start by recalling from~\cite{HarjuK91} an equivalence-respecting
transformation from multitape automata to single-tape weighted
automata.  

Recall that a single-tape automaton on a unary alphabet with
transition weights in a ring $R$ consists of a set of \emph{states}
$Q=\{q_1,\ldots,q_n\}$, \emph{initial states} $Q_0\subseteq Q$,
\emph{final states} $Q_f\subseteq Q$, and \emph{transition matrix} $M
\in M_n(R)$.  Given such an automaton, define the
\emph{initial-state vector} $\valpha \in R^{1\times n}$ and
\emph{final-state vector} $\veta \in R^{n\times 1}$ respectively by
\[ \alpha_i = \left\{  \begin{array}{ll}
                    1 & \mbox{ if $q_i \in Q_0$}\\
                    0 & \mbox{ otherwise}
\end{array} \right . \qquad \mbox{and} \qquad 
   \eta_i = \left\{  \begin{array}{ll}
                    1 & \mbox{ if $q_i \in Q_f$} \\
                    0 & \mbox{ otherwise}
\end{array}\right . \]
Then $\valpha M^l \veta$ is the weight of the (unique) input word of
length $l$.

Consider a $k$-tape automaton $A=(\boldsymbol{\Sigma},Q,E,Q_0,Q_f)$,
where $\boldsymbol{\Sigma} = (\Sigma_1,\ldots,\Sigma_k)$, and write
$S=\Sigma_1^*\times \cdots \times \Sigma_k^*$.  Recall the ring of
polynomials \[ F \langle \boldsymbol{\Sigma} \rangle = F\langle
\Sigma_1\rangle \otimes \cdots \otimes F\langle \Sigma_n\rangle,\] as
defined in Section~\ref{sec:semi}.  Recall also that we can identify
the monoid $S$ with the set of monomials in $F \langle
\boldsymbol{\Sigma} \rangle$, where $(w_1,\ldots,w_k) \in S$ corresponds to
$w_1 \otimes \cdots \otimes w_k$---indeed $F \langle
\boldsymbol{\Sigma} \rangle$ is the free $F$-algebra on $S$.

We derive from $A$ an $F \langle \boldsymbol{\Sigma} \rangle$-weighted
automaton $\widetilde{A}$ (with a single tape and unary input
alphabet) that has the same sets of states, initial states, and final
states as $A$.  We define the transition matrix $M$ of $\widetilde{A}$ by
combining the different transitions of $A$ into a single matrix with
entries in $F \langle \boldsymbol{\Sigma} \rangle$.  To this
end, suppose that the set of states of $A$ is $Q=\{q_1,\ldots,q_n\}$.
Define the matrix $M \in M_n(F \langle \boldsymbol{\Sigma} \rangle)$
by $M_{ij} = \sum_{(q_i,s,q_j)\in E} s$ for $1 \leq i,j \leq n$.

Let $\valpha$ and $\veta$ be the respective initial- and final-state
vectors of $\widetilde{A}$.  Then the following proposition is
straightforward.  Intuitively it says that the weight of the unary
word of length $l$ in $\widetilde{A}$ represents the language of all
length-$l$ tuples accepted by $A$.
\begin{proposition}
For all $l \in \mathbb{N}$ we have
$\valpha M^l \veta = \sum_{s\in S^{(l)}} A(s) \cdot s$.
\label{prop:matrix}
\end{proposition}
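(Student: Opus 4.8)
The plan is to establish, first, a combinatorial reading of the entries of $M^l$, and then to specialise it using the vectors $\valpha$ and $\veta$. The claim to prove en route is that, for all $l$ and all $i,j$,
\[
(M^l)_{ij} \;=\; \sum_{\rho}\, \ell(\rho) \, ,
\]
where $\rho$ ranges over the runs of $A$ of length $l$ from $q_i$ to $q_j$, and $\ell(\rho) = s_1 s_2 \cdots s_l \in S$ is the label of $\rho = e_1\cdots e_l$, viewed as a monomial of $F\langle\boldsymbol{\Sigma}\rangle$. This is proved by a routine induction on $l$: the base case $l=0$ uses $M^0 = I$, whose $(i,j)$ entry ($1$ if $i=j$, else $0$) matches the sum over empty runs; and the inductive step follows by expanding $M^{l+1} = M^l M$ and using $M_{jj'} = \sum_{(q_j,s,q_{j'})\in E} s$, since a run of length $l+1$ from $q_i$ to $q_{j'}$ is a run of length $l$ from $q_i$ to some $q_j$ followed by an edge $(q_j,s,q_{j'})$, and labels multiply: $\ell(\rho\, e) = \ell(\rho)\, s$.

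Given this formula, I would compute
\[
\valpha M^l \veta \;=\; \sum_{q_i \in Q_0,\ q_j \in Q_f} (M^l)_{ij} \;=\; \sum_{\rho}\, \ell(\rho) \, ,
\]
where now $\rho$ ranges over all runs of length $l$ from an initial state to a final state. Collecting terms according to the value of $\ell(\rho)$, the coefficient of a monomial $s \in S$ in this sum is precisely the number of such runs with label $s$, which is $A(s)$ by definition. Moreover, since every edge of $A$ carries a label in $S^{(1)}$ and the length function on $S$ is additive under the monoid product, a run of length $l$ has a label of length $l$; hence only monomials $s \in S^{(l)}$ occur, and we obtain $\valpha M^l \veta = \sum_{s \in S^{(l)}} A(s)\cdot s$, as required.

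There is no genuine obstacle here: the argument is bookkeeping of runs. The one point worth making explicit is that multiplication of monomials in $F\langle\boldsymbol{\Sigma}\rangle = F\langle\Sigma_1\rangle\otimes\cdots\otimes F\langle\Sigma_k\rangle$ agrees with the product of the monoid $S = \Sigma_1^*\times\cdots\times\Sigma_k^*$, so that a product $s_1 s_2 \cdots s_l$ of edge labels is again a monomial — the coordinatewise concatenation of the $s_t$, of length $|s_1|+\cdots+|s_l|$. This is exactly the identification of $F\langle\boldsymbol{\Sigma}\rangle$ with the free $F$-algebra on $S$ recalled just before the proposition, and it is what makes the label tracking in the induction valid and what confines the final sum to $S^{(l)}$.
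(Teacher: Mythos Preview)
Your argument is correct and is precisely the routine verification the paper has in mind; the paper itself gives no proof, declaring the proposition ``straightforward.'' The one point you take care to make explicit---that multiplication of monomials in $F\langle\boldsymbol{\Sigma}\rangle$ agrees with the monoid product in $S$, so that labels of runs multiply correctly and lengths add---is exactly the content behind the paper's remark that $F\langle\boldsymbol{\Sigma}\rangle$ is the free $F$-algebra on $S$.
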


Now consider two $k$-tape automata $A$ and $B$.  Let the weighted
single-tape automata derived from $A$ and $B$ have respective
transition matrices $M_A$ and $M_B$, initial-state vectors $\valpha_A$
and $\valpha_B$, and final-state vectors $\veta_A$ and $\veta_B$.  We
combine the latter into a single weighted automaton with transition
matrix $M$, initial-state vector $\valpha$, and final-state vector
$\veta$, respectively defined by:
\[ \valpha = ( \valpha_A \; \valpha_B) \qquad
   M = \left(\begin{array}{cc} M_A & 0\\ 0 & M_B 
     \end{array} \right) \qquad
   \veta = \left(\begin{array}{c} \veta_A \\ -\veta_B 
     \end{array} \right) \qquad
\]
\begin{proposition}
  Automata $A$ and $B$ are multiplicity equivalent if and only if
  $\valpha M^l \veta = 0$ for $l=0,1,\ldots,n-1$, where $n$ is the
  total number of states of the two automata.
\label{prop:main}
\end{proposition}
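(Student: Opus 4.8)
The plan is to combine Proposition~\ref{prop:matrix}, which expresses $\valpha M^l \veta$ as the generating polynomial $\sum_{s \in S^{(l)}} (A(s)-B(s)) \cdot s$ in $F \langle \boldsymbol{\Sigma} \rangle$, with Theorem~\ref{thm:harju}, which says that equivalence can be tested by examining only inputs of length at most $n-1$. First I would observe that, by construction of the combined automaton, $\valpha M^l \veta = \valpha_A M_A^l \veta_A - \valpha_B M_B^l \veta_B$, and that applying Proposition~\ref{prop:matrix} to each of $\widetilde A$ and $\widetilde B$ gives
\[
\valpha M^l \veta = \sum_{s \in S^{(l)}} \bigl(A(s) - B(s)\bigr) \cdot s \, .
\]
Since $S$ is identified with the set of monomials of $F \langle \boldsymbol{\Sigma} \rangle$, and distinct monomials are linearly independent over $F$, the right-hand side is the zero element of $F \langle \boldsymbol{\Sigma} \rangle$ if and only if $A(s) = B(s)$ for every $s \in S^{(l)}$. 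Hence $\valpha M^l \veta = 0$ for $l = 0,1,\ldots,n-1$ if and only if $A(s) = B(s)$ for all $s \in S$ of length at most $n-1$.

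The remaining step is to invoke Theorem~\ref{thm:harju}: the latter condition is equivalent to multiplicity equivalence of $A$ and $B$, i.e.\ $A(s) = B(s)$ for \emph{all} $s \in S$. Chaining the two equivalences yields the proposition. One small bookkeeping point I would make explicit is that $n$ here denotes the total number of states of $A$ and $B$, which is exactly the dimension of the block matrix $M$, so the index range $l = 0,\ldots,n-1$ matches the bound in Theorem~\ref{thm:harju} without any off-by-one adjustment; the case $l=0$ records agreement on the pair of empty words, which corresponds to $\valpha \veta = \sum_{q \in Q_0 \cap Q_f} 1$ computed separately in each automaton.

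I do not anticipate a genuine obstacle here: the statement is essentially a repackaging of Theorem~\ref{thm:harju} through the algebraic encoding set up in Proposition~\ref{prop:matrix}. The only thing requiring a moment's care is the passage from ``the polynomial $\sum_s (A(s)-B(s))\cdot s$ vanishes in $F\langle\boldsymbol\Sigma\rangle$'' to ``each coefficient $A(s)-B(s)$ vanishes'', which rests on the fact recalled in Section~\ref{sec:semi} that every element of $F\langle\boldsymbol\Sigma\rangle$ has a \emph{unique} expression as a finite $F$-linear combination of monomials from $S = \Sigma_1^* \times \cdots \times \Sigma_k^*$; this is exactly the statement that $F\langle\boldsymbol\Sigma\rangle$ is free as an $F$-algebra on the monoid $S$. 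With that in hand the argument is immediate.
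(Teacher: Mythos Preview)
Your argument is logically valid, but it inverts the paper's dependency structure. You prove Proposition~\ref{prop:main} by invoking Theorem~\ref{thm:harju} as a black box; the paper, by contrast, proves Proposition~\ref{prop:main} \emph{independently} and then immediately notes that Theorem~\ref{thm:harju} follows from it. In other words, Proposition~\ref{prop:main} is the paper's new proof of the Harju--Karhum\"aki bound, not a reformulation of it, so using Theorem~\ref{thm:harju} inside its proof would make the paper's contribution circular.

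The paper's self-contained argument for the nontrivial direction (that vanishing for $l=0,\ldots,n-1$ implies vanishing for all $l$) runs as follows. Fix $l\geq n$ and apply the valuation $\Phi_l$ into the tensor product of generic matrix algebras $F_l\langle\Sigma_1\rangle\otimes\cdots\otimes F_l\langle\Sigma_k\rangle$. By Proposition~\ref{prop:sc-AL}, it suffices to show $\valpha\,\Phi_l(M)^l\,\veta=0$. By Proposition~\ref{prop:ore} this tensor product is an Ore domain, hence embeds in a division ring; the standard result on single-tape weighted automata over a division ring then says that an $n$-state automaton assigning weight $0$ to all words of length $<n$ is identically zero, so the hypothesis $\valpha\,\Phi_l(M)^i\,\veta=0$ for $i<n$ forces $\valpha\,\Phi_l(M)^l\,\veta=0$ as required. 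This is where the paper's matrix-algebra machinery (Propositions~\ref{prop:sc-AL} and~\ref{prop:ore}) does its work, replacing the ordered-group power-series embedding used by Harju and Karhum\"aki.

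Your first paragraph---the reduction via Proposition~\ref{prop:matrix} and linear independence of monomials to ``$\valpha M^l\veta=0$ for all $l$''---matches the paper exactly; only the step bounding $l$ differs.
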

\begin{proof}
Since $S$ is a linearly independent subset of $F \langle
\boldsymbol{\Sigma} \rangle$, from Proposition~\ref{prop:matrix} it
follows that $A$ and $B$ are multiplicity equivalent just in case
$\valpha_A (M_A)^l \veta_A = \valpha_B (M_B)^l \veta_B$ for all $l \in
\mathbb{N}$.  The latter is clearly equivalent to $\valpha M^l \veta =
0$ for all $l \in \mathbb{N}$.  It remains to show that we can check
equivalence by looking only at exponents $l$ in the range
$0,1,\ldots,n-1$.

Suppose that $\valpha M^i \veta = 0$ for $i=0,\ldots,n-1$.  We show
that $\valpha M^l \veta = 0$ for an arbitrary $l \geq n$.

Consider the map $\Phi_l :F \langle
\boldsymbol{\Sigma} \rangle\rightarrow F_l\langle
\Sigma_1\rangle\otimes\cdots \otimes F_l\langle \Sigma_k\rangle$, as
defined in (\ref{eq:psi}).  Observe that $\valpha M^l \veta$ is a
polynomial expression in $F \langle \boldsymbol{\Sigma} \rangle$ of degree at
most $l$.  Therefore by Proposition~\ref{prop:sc-AL} ((ii)
$\Leftrightarrow$ (iv)), to show that $\valpha M^l \veta = 0$ it
suffices to show that
\begin{gather}
\Phi_l (\valpha M^l \veta ) \, = \, 0 \, .
\label{eq:goal4}
\end{gather}
Let us write $\Phi_l(M)$ for the pointwise application of $\Phi_l$
to the matrix $M$, so that $\Phi_l(M)$
is an $n\times n$ matrix, each of whose entries is an $n^k\times n^k$
matrix belonging to
$F_l\langle \Sigma_1\rangle\otimes\cdots
\otimes F_l\langle \Sigma_k\rangle$.  Since $\Phi_l$ is a homomorphism
and $\valpha$ and $\veta$ are integer vectors,
(\ref{eq:goal4}) is equivalent to:
\begin{gather}
\valpha \, \Phi_l(M)^l \,\veta \, = \, 0 \, .
\label{eq:goal2}
\end{gather}

Recall from Proposition~\ref{prop:ore} that the tensor product of
generic matrix algebras $F_l\langle \Sigma_1\rangle\otimes\cdots
\otimes F_l\langle \Sigma_k\rangle$ is an Ore domain and hence can be
embedded in a division ring.  Now a standard result about single-tape
weighted automata with transition weights in a division ring is that
such an automaton with $n$ states is equivalent to the zero automaton
if and only if it assigns zero weight to all words of length $n$
(see~\cite[pp143--145]{Eilenberg74} and~\cite{Schutzenberger61}).
Applying this result to the unary weighted automaton defined by
$\valpha$, $M$, and $\veta$, we see that (\ref{eq:goal2}) is implied
by
\begin{gather}
\valpha \,\Phi_l(M)^i\, \veta = \, 0
\qquad i=0,1,\ldots,n-1 \, .
\label{eq:goal3}
\end{gather}
But, since $\Phi_l$ is a homomorphism, (\ref{eq:goal3}) is implied by 
\begin{gather}
\valpha M^i \veta = 0 \qquad i=0,1,\ldots,n-1 \, .
\end{gather}
This concludes the proof.\qed
\end{proof}
Theorem~\ref{thm:harju} immediately follows from
Proposition~\ref{prop:main}.  

\begin{remark}
  The difference between our proof of Theorem~\ref{thm:harju} and the
  proof in~\cite{HarjuK91} is that we consider a family of
  homomorphisms of $F \langle \boldsymbol{\Sigma} \rangle$ into Ore
  domains of matrices---the maps $\Phi_l$---rather than a single
  ``global'' embedding of $F \langle \boldsymbol{\Sigma} \rangle$ into
  a division ring of power series over a product of free groups.  None
  of the maps $\Phi_l$ is an embedding, but it suffices to use the
  lower bound on the degrees of polynomial identities in
  Proposition~\ref{prop:sc-AL} in lieu of injectivity.  On the other
  hand, the fact that $F_l \langle \Sigma_1 \rangle \otimes \cdots
  \otimes F_l \langle \Sigma_k \rangle$ satisfies a polynomial identity
  makes it relatively straightforward to exhibit an embedding of the
  latter into a division ring.  As we now show, this approach leads
  directly to a very simple randomised polynomial-time algorithm for
  solving the equivalence problem.
\label{rem:difference}
\end{remark}

\subsection{Randomised Algorithm}

Proposition~\ref{prop:main} reduces the problem of checking
multiplicity equivalence of multitape automata $A$ and $B$ to checking
the partially commutative identities $\valpha M^l \veta = 0$, $l
=0,1,\ldots,n-1$ in $F \langle \boldsymbol{\Sigma} \rangle$.  Since
each identity has degree less than $n$, applying
Proposition~\ref{prop:sc-AL} ((iii) $\Leftrightarrow$ (iv)) we see
that $A$ and $B$ are equivalent if and only if
\begin{gather}
\valpha \, \Psi_n(M)^l \, \veta = 0 
\qquad
l=0,1,\ldots,n-1 \, .
\label{eq:pit}
\end{gather}

Each equation $\valpha \Psi_n(M)^l \veta=0$ in (\ref{eq:pit})
asserts the zeroness of an $n^k\times n^k$ matrix of polynomials in
the commuting variables $t^{(x)}_{ij}$, with each polynomial having degree
less than $n$.  Suppose that $\valpha  \Psi_n(M)^l \veta \neq 0$
for some $l$---say the matrix entry with index
$((1,\ldots,1),(l_1+1,\ldots,l_k+1))$ contains a monomial with non-zero
coefficient.  By (\ref{eq:psi2}) such a monomial determines a term $s
\in \Sigma_1^{l_1} \times \cdots \times \Sigma_k^{l_k}$ with non-zero
coefficient in $\valpha M^l \veta$, and by
Proposition~\ref{prop:matrix} we have $A(s)\neq B(s)$.

We can verify each polynomial identity in (\ref{eq:pit}), outputting a
monomial of any non-zero polynomial, using a classical identity testing
procedure based on the \emph{isolation lemma} of~\cite{MVV87}.

\begin{lemma}[\cite{MVV87}]
  There is a randomised polynomial-time algorithm that inputs a
  multilinear polynomial $f(x_1,\ldots,x_m)$, represented as an
  algebraic circuit, and either outputs a monomial of $f$ or that $f$
  is zero.  Moreover the algorithm is always correct if $f$ is the
  zero polynomial and is correct with probability at least $1/2$ if
  $f$ is non-zero.
\label{lem:pit}
\end{lemma}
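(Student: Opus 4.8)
The plan is to combine the isolation lemma of~\cite{MVV87} with a standard substitution-and-interpolation scheme. Recall the isolation lemma in the following form: if $\mathcal{F}$ is a nonempty family of subsets of $\{1,\ldots,m\}$ and the weights $w_1,\ldots,w_m$ are drawn independently and uniformly at random from $\{1,\ldots,2m\}$, then with probability at least $1/2$ there is a unique set $S\in\mathcal{F}$ minimising $w(S):=\sum_{i\in S}w_i$. Since $f$ is multilinear, its monomials are in bijection with subsets of $\{1,\ldots,m\}$; write $f=\sum_{S\in\mathcal{F}}c_S\prod_{i\in S}x_i$ with all $c_S\neq 0$, so that $\mathcal{F}$ is the support of $f$ and is empty precisely when $f=0$. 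Draw the weights $w_i$ as above and consider the univariate polynomial $g(y)=\sum_{S\in\mathcal{F}}c_S\,y^{w(S)}$ obtained by substituting $x_i\mapsto y^{w_i}$; its degree is at most $2m^2$, and for any $a\in F$ the value $g(a)$ can be computed in polynomial time by forming the powers $a^{w_i}$ and running the given circuit for $f$.

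First I would recover all coefficients of $g$ by evaluating it at $2m^2+1$ distinct points of $F$---which exist because $F$ is infinite---and interpolating; when $F$ has characteristic zero this is done modulo a random prime of polynomially many bits, which preserves nonzeroness with high probability. If $g=0$ the algorithm reports that $f$ is zero. Otherwise, let $d^\star$ be the least exponent occurring in $g$ with a nonzero coefficient, and let $[y^{d^\star}]g$ denote that coefficient. When the minimum-weight set $S^\star$ is unique---which, conditioned on $f\neq 0$, happens with probability at least $1/2$---we have $[y^{d^\star}]g=c_{S^\star}\neq 0$, and $S^\star$ can be identified coordinate by coordinate: for each $j$, repeating the computation above with the extra substitution $x_j\mapsto 0$ produces $g_j(y)=\sum_{S\in\mathcal{F},\,j\notin S}c_S\,y^{w(S)}$, and $[y^{d^\star}]g_j$ equals $c_{S^\star}\neq 0$ if $j\notin S^\star$ and equals $0$ if $j\in S^\star$. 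Hence $S^\star=\{\,j:[y^{d^\star}]g_j=0\,\}$, and the algorithm outputs the monomial $\prod_{i\in S^\star}x_i$.

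For correctness, observe that if $f=0$ then $g$ and every $g_j$ vanish identically for every choice of weights, so the algorithm always reports that $f$ is zero; and if $f\neq 0$ then with probability at least $1/2$ the isolation lemma succeeds, in which case the reasoning above shows that a genuine monomial of $f$ is produced. The running time is polynomial in $m$ and the size of the circuit. I expect the monomial-extraction step to be the main obstacle: detecting that $f$ is nonzero is routine, but pinning down which monomial survives relies crucially on the uniqueness supplied by the isolation lemma, since extracting an individual coefficient of a multilinear circuit directly would cost exponential time; the remaining technicality---controlling the bit-size of the interpolation in characteristic zero---is dealt with by the usual random-prime reduction.
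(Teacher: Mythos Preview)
Your proposal is correct and follows essentially the same approach as the paper: random weights from $\{1,\ldots,2m\}$, the isolation lemma to get a unique minimum-weight monomial with probability at least $1/2$, and the univariate substitution $g(y)=f(y^{w_1},\ldots,y^{w_m})$ to detect it. The paper only sketches this and leaves the extraction of the monomial's composition to ``similar ideas''; your variable-zeroing trick and the interpolation/bit-size remarks are reasonable ways to fill that in.
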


The idea behind the algorithm described in Lemma~\ref{lem:pit} is to
choose a weight $w_i \in \{1,\ldots,2m\}$ for each variable $x_i$ of
$f$ independently and uniformly at random.  Defining the weight of a
monomial $x_{i_1}\ldots x_{i_t}$ to be $w_{i_1}+\ldots+w_{i_t}$, then
with probability at least $1/2$ there is a unique minimum-weight
monomial.  The existence of a minimum-weight monomial can be detected
by computing the polynomial $g(y)=f(y^{w_1},\ldots,y^{w_k})$, since a
monomial with weight $w$ in $f$ yields a monomial of degree $w$ in
$g$.  Using similar ideas one can moreover determine the composition of
a minimum-weight monomial in $f$.

Applying Lemma~\ref{lem:pit} we obtain our main result:
\begin{theorem}
Let $k$ be fixed.  Then multiplicity equivalence of $k$-tape automata can
be decided in randomised polynomial time.  Moreover there is a
randomised polynomial algorithm for the function problem of computing
a distinguishing input given two inequivalent automata.
\label{thm:main}
\end{theorem}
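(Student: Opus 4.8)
The plan is to turn the reformulation (\ref{eq:pit}) into an algorithm by running the procedure of Lemma~\ref{lem:pit} on the entries of the matrices $\valpha\,\Psi_n(M)^l\,\veta$. Fix $k$; given input automata $A$ and $B$ with $n$ states in total, I would form $M$, $\valpha$, $\veta$ as in Proposition~\ref{prop:main}. By (\ref{eq:pit}), $A$ and $B$ are multiplicity equivalent if and only if $\valpha\,\Psi_n(M)^l\,\veta=0$ for $l=0,1,\ldots,n-1$; and by (\ref{eq:psi2}) a matrix $\valpha\,\Psi_n(M)^l\,\veta$ is nonzero precisely when one of the entries in its first row (those with row index $(1,\ldots,1)$) is a nonzero polynomial, any monomial of which decodes --- exactly as in the discussion preceding Lemma~\ref{lem:pit} --- to a tuple $s \in S^{(l)}$ with $A(s) \neq B(s)$. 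The algorithm thus computes, for each $l<n$, algebraic circuits for the $n^k$ first-row entries of $\valpha\,\Psi_n(M)^l\,\veta$, runs the algorithm of Lemma~\ref{lem:pit} on each, and reports ``equivalent'' if every call reports zero; otherwise it decodes a returned monomial into a distinguishing tuple $s$ and reports ``inequivalent''.

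The next step is to check that these circuits have polynomial size and that Lemma~\ref{lem:pit} genuinely applies. Because every edge of a multitape automaton reads a single letter, each $\Psi_n(s)$ with $s \in S^{(1)}$ is a tensor product of $k-1$ identity matrices with one superdiagonal matrix, so $\Psi_n(M)$ is an explicit $n^{k+1} \times n^{k+1}$ matrix (an $n \times n$ array of $n^k \times n^k$ blocks) each entry of which is a linear form in the $O(n \sum_i |\Sigma_i|)$ commuting indeterminates $t_{ij}^{(x)}$. For fixed $k$ this matrix has polynomial size, so its $l$-th power for $l \leq n-1$ is obtained by $O(n)$ matrix multiplications, yielding a circuit of size $\mathrm{poly}(n, \sum_i |\Sigma_i|)$ for each entry. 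I would then note that each such entry is multilinear in the $t_{ij}^{(x)}$: within a single tensor factor the variables collected along any nonzero contribution have strictly increasing first index, so none is repeated, while variables coming from distinct tensor factors carry superscripts from disjoint alphabets --- this is just the bookkeeping already behind (\ref{eq:psi2}). Hence the inputs to Lemma~\ref{lem:pit} are multilinear polynomials presented by polynomial-size circuits, as required.

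Finally I would assemble the correctness and complexity argument. If $A$ and $B$ are equivalent then every entry above is identically zero, and since Lemma~\ref{lem:pit} is always correct on the zero polynomial the algorithm reports ``equivalent'' with certainty. If they are inequivalent then, by (\ref{eq:pit}) and the discussion above, some first-row entry computes a nonzero multilinear polynomial; on that circuit Lemma~\ref{lem:pit} returns a genuine monomial, hence a correct distinguishing tuple, with probability at least $1/2$, and $\lambda$ independent repetitions of the whole procedure reduce the failure probability to $2^{-\lambda}$. To guarantee one-sided error I would additionally verify a candidate tuple $s$ before output: for fixed $k$ the quantity $A(s)$ is the number of directed paths in an acyclic graph with at most $n^{k+1}$ vertices and is therefore computable in polynomial time, so one confirms $A(s) \neq B(s)$ and retries if this fails. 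For fixed $k$ all of this --- forming and powering the $n^{k+1} \times n^{k+1}$ matrix $\Psi_n(M)$ and making the $\mathrm{poly}(n)$ calls to Lemma~\ref{lem:pit} --- runs in polynomial time, giving a randomised polynomial-time algorithm both for deciding multiplicity equivalence and for the function problem of producing a witness when the automata differ. I expect the only genuinely delicate step to be the multilinearity of the entries of $\valpha\,\Psi_n(M)^l\,\veta$ together with the decoding of a returned monomial back into a word-tuple; everything else is routine bookkeeping, and the $n^k$ blow-up in matrix dimension is exactly why the number of tapes is fixed.
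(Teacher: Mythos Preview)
Your proposal is correct and follows essentially the same route as the paper: reduce via Proposition~\ref{prop:main} and~(\ref{eq:pit}) to zero-testing the entries of $\valpha\,\Psi_n(M)^l\,\veta$, then invoke Lemma~\ref{lem:pit} and decode any returned monomial via~(\ref{eq:psi2}). You make explicit a few points the paper leaves tacit---multilinearity of the entries in the $t^{(x)}_{ij}$, the polynomial circuit-size bound, the restriction to the first row, and the optional verification of a candidate witness---but these are elaborations of the same argument rather than a different one.
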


The reason for the requirement that $k$ be fixed is because the
dimension of the entries of the transition matrix $M$, and thus the
number of polynomials to be checked for equality, depends
exponentially on $k$.

The above use of the isolation technique
generalises~\cite{KieferMOWW13}, where it is used to generate
counterexample words of weighted single-tape automata.  A very similar
application in~\cite{ArvindM08} occurs in the context of identity
testing for non-commutative algebraic branching programs.

\section{Conclusion}
We have given a simple randomised algorithm for deciding language
equivalence of deterministic multitape automata and multiplicity
equivalence of nondeterministic automata.  The algorithm arises
directly from algebraic constructions used to establish decidability
of the problem, and runs in polynomial time for each fixed number of
tapes.  We leave open the question of whether there is a deterministic
polynomial-time algorithm for deciding the equivalence of
deterministic and weighted multitape automata with a fixed number of
tapes.  (Recall that the 2-tape case is already known to be in
polynomial time~\cite{FriedmanG82}.)  We also leave open whether there
is a deterministic or randomised polynomial time algorithm for solving
the problem in case the number of tapes is not fixed.

%% The equivalence problem for multitape automata can be seen as a 
%% partially commutative polynomial identity testing problem.  It well
%% known that the commutative identity testing problem is 

%\begin{itemize}
%\item Minimisation.  
%\item Hardness results for the above.
%\item For unambiguous automata, repalce $F$ with $\mathbb{Z}_2$.
%\item Problem with replaceing tensor product with exterior product is%
%  that the alternating subspace is not an ideal.
%\end{itemize}

\appendix
\section{Proof of Proposition~\ref{prop:ore}}
\label{sec:appendix}
We first recall a construction of a \emph{crossed product division
  algebra} from~\cite[Proposition 1.1]{Saltman}.  Let $z_1,\ldots,z_k$
be commuting indeterminates and write
$F=\mathbb{Q}(z^n_1,\ldots,z^n_k)$ for the field of rational functions
obtained by adjoining $z^n_1,\ldots,z^n_k$ to $\mathbb{Q}$.
Furthermore, let $K/F$ be a field extension whose Galois group is
generated by commuting automorphisms $\sigma_1,\ldots,\sigma_k$, each
of order $n$, which has fixed field $F$.  (Such an extension can easily
be constructed by adjoining extra indeterminates to $F$, and having
the $\sigma_i$ be suitable permutations of the new indeterminates.)
For each $i$, $1 \leq i \leq k$, write $K_i$ for the subfield of $K$
that is fixed by each $\sigma_j$ for $j\neq i$; then define $D_i$ to
be the $F$-algebra generated by $K_i$ and $z_i$ such that $az_i =
z_i\sigma_i(a)$ for all $a \in K_i$.  Then each $D_i$ is a simple
algebra of dimension $n^2$ over its centre $F$.  It is shown
in~\cite[Proposition 1.1]{Saltman} that the tensor product $D_1
\otimes_F \cdots \otimes_F D_k$ can be characterised as the
localisation of an iterated skew polynomial ring---and is therefore a
domain.

The following two propositions are straightforward adaptations
of~\cite[Proposition 7.5.5.]{Cohn03} and~\cite[Proposition
  7.7.2]{Cohn03} to partially commutative identities.
\begin{proposition}
Let $f \in F\langle X_1\rangle \otimes \cdots \otimes F\langle
X_k\rangle$.  If the partially commutative identity $f=0$ holds in
$D_1 \otimes_F \cdots \otimes_F D_k$ then it also holds in $(D_1
\otimes_F L) \otimes_F \cdots \otimes_F (D_k \otimes_F L)$ for any
extension field $L$ of $F$.
\label{prop:extend}
\end{proposition}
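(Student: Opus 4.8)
The plan is to run the standard ``generic element'' argument: substitute for each variable a generic element of $D_i$ with indeterminate scalar coefficients, so that the vanishing of $f$ over $\boldsymbol{D}$ becomes the vanishing of a finite family of ordinary commutative polynomials over $F$, and then use that a polynomial over the infinite field $F$ which vanishes at all $F$-rational points vanishes identically and hence stays zero after any scalar extension. This is the partially commutative counterpart of~\cite[Proposition 7.5.5]{Cohn03} and I expect it to go through essentially verbatim.

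First I would fix, for each $i$, the finite set $Y_i\subseteq X_i$ of variables occurring in $f$, and note that $\widetilde{\boldsymbol{v}}(f)$ depends only on the restriction of $\boldsymbol{v}$ to the $Y_i$. Since each $D_i$ is a central simple $F$-algebra of dimension $n^2$, it is finite-dimensional over $F$; fix an $F$-basis $u_{i,1},\ldots,u_{i,m_i}$ of $D_i$, and introduce commuting indeterminates $\xi^{(x)}_{i,j}$ for $x\in Y_i$ and $1\leq j\leq m_i$, the families being disjoint across $i$. Let $T_i=F[\xi^{(x)}_{i,j}]$, let $T=T_1\otimes_F\cdots\otimes_F T_k$ be the polynomial ring over $F$ in all of the $\xi$'s, and put $D=D_1\otimes_F\cdots\otimes_F D_k$. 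The ``universal'' valuation $\boldsymbol{\gamma}=(\gamma_1,\ldots,\gamma_k)$, with $\gamma_i(x)=\sum_j u_{i,j}\otimes\xi^{(x)}_{i,j}\in D_i\otimes_F T_i$ for $x\in Y_i$, extends to a homomorphism $\widetilde{\boldsymbol{\gamma}}$ with target identified with $D\otimes_F T$. Choosing an $F$-basis $w_1,\ldots,w_N$ of $D$, I may write $\widetilde{\boldsymbol{\gamma}}(f)=\sum_{\ell=1}^N w_\ell\otimes P_\ell$ with uniquely determined $P_\ell\in T$.

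The main step is that every valuation is a specialisation of $\boldsymbol{\gamma}$. Given a commutative $F$-algebra $R$ and a valuation $\boldsymbol{v}$ of $F\langle\boldsymbol{X}\rangle$ in $(D_1\otimes_F R,\ldots,D_k\otimes_F R)$, write $v_i(x)=\sum_j u_{i,j}\otimes r^{(x)}_{i,j}$ with $r^{(x)}_{i,j}\in R$ for $x\in Y_i$; there is a unique $F$-algebra homomorphism $\chi_i:T_i\to R$ sending $\xi^{(x)}_{i,j}$ to $r^{(x)}_{i,j}$, and functoriality of $\otimes_F$ yields $\widetilde{\boldsymbol{v}}(f)=\sum_\ell w_\ell\otimes(\chi_1\otimes\cdots\otimes\chi_k)(P_\ell)$ inside $D\otimes_F R^{\otimes k}\cong(D_1\otimes_F R)\otimes_F\cdots\otimes_F(D_k\otimes_F R)$. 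Taking $R=F$, where $F^{\otimes k}=F$ and $\chi_1\otimes\cdots\otimes\chi_k$ is ordinary evaluation: the hypothesis that $f=0$ holds in $\boldsymbol{D}$ says $\sum_\ell w_\ell\cdot\chi(P_\ell)=0$ in $D$ for every choice of values of the $\xi$'s in $F$, so by $F$-linear independence of the $w_\ell$ every $F$-specialisation of every $P_\ell$ is zero, and since $F$ is infinite each $P_\ell$ is the zero polynomial. Taking $R=L$: then $(\chi_1\otimes\cdots\otimes\chi_k)(P_\ell)=0$ for every $\ell$ and every valuation, so $\widetilde{\boldsymbol{v}}(f)=0$ for all valuations in $(D_1\otimes_F L)\otimes_F\cdots\otimes_F(D_k\otimes_F L)$, which is the claim.

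The argument is mostly tensor-product bookkeeping; the one non-formal ingredient is the passage from ``vanishes at all $F$-points'' to ``identically zero'', which uses that $F$ is infinite (false otherwise, e.g.\ $x^p-x$ over $\mathbb{F}_p$). The point needing care is keeping the tensor factors straight: a valuation into $(D_1\otimes_F R)\otimes_F\cdots\otimes_F(D_k\otimes_F R)$ naturally lands in $D\otimes_F R^{\otimes k}$, not $D\otimes_F R$, and it is the identity $F^{\otimes k}=F$ that makes the $R=F$ instance reduce to ordinary polynomial evaluation. Since only finite-dimensionality of the $D_i$ and infiniteness of $F$ enter, the same proof gives the statement for any finite-dimensional $F$-algebras $D_i$.
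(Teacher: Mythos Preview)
Your argument is correct and is essentially the same ``generic element'' proof as the paper's: both substitute basis expansions with indeterminate scalar coefficients, reduce to the vanishing of commutative polynomials over $F$, invoke infiniteness of $F$ to conclude these polynomials are identically zero, and then note the same basis serves after scalar extension. Your write-up is more careful about the tensor bookkeeping (in particular the $R^{\otimes k}$ versus $R$ point), but the method is identical to the paper's adaptation of~\cite[Proposition 7.5.5]{Cohn03}.
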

\begin{proof}
Noting that the $D_i$ are all isomorphic as $F$-algebras, let
$\{e_1,\ldots,e_{n^2}\}$ be a basis of each $D_i$ over its centre $F$.
For each variable $x$ appearing in $f$, introduce commuting
indeterminates $t_{xj}$, $1 \leq j \leq n^2$, and write
$x=\sum_{j=1}^{n^2} t_{xj}e_j$.  Then we can express $f$ in the form
\begin{gather} f = \sum_{\nu \in \{1,\ldots,n^2\}^{k}} f_\nu \cdot 
(e_{\nu(1)} \otimes \cdots \otimes e_{\nu(k)}) \, , 
\label{eq:formula}
\end{gather}
where $f_\nu \in 
F \langle t_{xj} : x\in X_1,1 \leq j\leq n^2 \rangle
\otimes_F \cdots \otimes_F F 
\langle t_{xj} : x\in X_k,1 \leq j\leq n^2 \rangle$.

By assumption, each $f_\nu$ evaluates to $0$ for all values of the
$t_{xj}$ in $F$.  Since $F$ is an infinite field it follows that each
$f_\nu$ must be identically zero.  Now we can also regard
$\{e_1,\ldots,e_{n^2}\}$ as a basis for $D_i \otimes_F L$ over $L$.
Then by (\ref{eq:formula}), $f=0$ also on $(D_1 \otimes_F L) \otimes_F
\cdots \otimes_F (D_k \otimes_F L)$.  \qed
\end{proof}

\begin{proposition}
$F_n\langle X_1\rangle \otimes \cdots \otimes F_n\langle X_k\rangle$
  is a domain.
\label{prop:domain}
\end{proposition}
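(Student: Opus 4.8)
The plan is to adapt the proof of~\cite[Proposition~7.7.2]{Cohn03} that a single generic matrix algebra is a domain. Write $A = F_n\langle X_1\rangle \otimes \cdots \otimes F_n\langle X_k\rangle$, and note that the valuation $\Phi_n = \Phi^{\boldsymbol{X}}_n$ of~\eqref{eq:psi} is a \emph{surjective} $F$-algebra homomorphism $F\langle\boldsymbol{X}\rangle \twoheadrightarrow A$, being the tensor product of the surjections $\Phi^{X_i}_n : F\langle X_i\rangle \twoheadrightarrow F_n\langle X_i\rangle$. Hence $A \cong F\langle\boldsymbol{X}\rangle/\ker\Phi_n$, and it is enough to exhibit an $F$-algebra homomorphism $\rho : F\langle\boldsymbol{X}\rangle \to R$ into a \emph{domain} $R$ with $\ker\rho = \ker\Phi_n$: then $A \cong F\langle\boldsymbol{X}\rangle/\ker\rho \cong \rho(F\langle\boldsymbol{X}\rangle)$ is a subring of $R$, hence a domain. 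For $R$ I would take $R = D \otimes_F F[u]$, where $D = D_1 \otimes_F \cdots \otimes_F D_k$ is the crossed-product algebra recalled above --- a domain, indeed a finite-dimensional central division algebra over $F$ --- and $u$ denotes a family of fresh commuting indeterminates; $R$ is then a polynomial ring over the domain $D$, and so is itself a domain. To define $\rho$, fix for each $i$ a basis $e^{(i)}_1,\ldots,e^{(i)}_{n^2}$ of $D_i$ over $F$, introduce indeterminates $u^{(x)}_j$ for $x \in X_i$ and $1 \le j \le n^2$, and let $\rho$ be the valuation sending each $x \in X_i$ to the \emph{generic element} $\sum_j u^{(x)}_j e^{(i)}_j$ of $D_i$. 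Concretely $\rho = \rho_1 \otimes \cdots \otimes \rho_k$ with $\rho_i : F\langle X_i\rangle \to D_i \otimes_F F[u]$, so that $\rho$ indeed lands in $D \otimes_F F[u] = R$.

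It then remains to verify $\ker\rho = \ker\Phi_n$. Here I would first note that, since the $e^{(i)}_j$ span $D_i$ over the infinite field $F$, an element of $D \otimes_F F[u]$ vanishes iff each of its coefficient polynomials does; hence $\rho(f) = 0$ iff $f$ evaluates to $0$ under every valuation into $\boldsymbol{D} = (D_1,\ldots,D_k)$. In other words, $\ker\rho$ is exactly the ideal of partially commutative identities of $\boldsymbol{D}$. For the inclusion $\ker\rho \subseteq \ker\Phi_n$: if $f = 0$ holds in $D_1 \otimes_F \cdots \otimes_F D_k$, then by Proposition~\ref{prop:extend}, taking $L$ to be an algebraically closed extension of $F$, it also holds in $(D_1 \otimes_F L) \otimes_F \cdots \otimes_F (D_k \otimes_F L)$; since each $D_i$ is central simple of degree $n$ we have $D_i \otimes_F L \cong M_n(L)$, so $f = 0$ holds in $M_n(L) \otimes_F \cdots \otimes_F M_n(L)$, and therefore $\Phi_n(f) = 0$ by Proposition~\ref{prop:sc-AL}~((i)~$\Rightarrow$~(ii)). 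For the reverse inclusion: if $\Phi_n(f) = 0$, then Proposition~\ref{prop:sc-AL}~((ii)~$\Rightarrow$~(i)) gives that $f = 0$ holds in $M_n(L) \otimes_F \cdots \otimes_F M_n(L) \cong (D_1 \otimes_F L) \otimes_F \cdots \otimes_F (D_k \otimes_F L)$; since each $D_i$ embeds as an $F$-subalgebra of $D_i \otimes_F L$, the algebra $D$ embeds in $(D_1 \otimes_F L) \otimes_F \cdots \otimes_F (D_k \otimes_F L)$, so every partially commutative identity of the latter is one of $\boldsymbol{D}$, i.e.\ $f \in \ker\rho$. This gives $\ker\rho = \ker\Phi_n$, and the proposition follows.

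I expect the part calling for the most care to be the bookkeeping around the generic evaluation $\rho$ --- namely checking that it is genuinely a tensor product of $F$-algebra homomorphisms whose image lies inside $D \otimes_F F[u]$, and that its kernel is captured exactly by the identities of $\boldsymbol{D}$. There should be no deeper obstacle, since the two substantive ingredients are already in place: Saltman's crossed-product construction, which supplies central simple $F$-algebras $D_1,\ldots,D_k$ of degree $n$ whose tensor product over $F$ is a domain, and the (degree-independent) equivalence (i)~$\Leftrightarrow$~(ii) of Proposition~\ref{prop:sc-AL} together with the splitting $D_i \otimes_F L \cong M_n(L)$, which between them identify the partially commutative identities of $\boldsymbol{D}$ with those of tensor powers of $M_n$. (Saltman's construction was stated over the particular field $F = \mathbb{Q}(z_1^n,\ldots,z_k^n)$, but the same construction works over any infinite base field, and it is in any case harmless to take $F$ to be this field throughout.)
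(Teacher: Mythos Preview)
Your proof is correct and shares its core with the paper's: both use Saltman's crossed-product domain $D = D_1 \otimes_F \cdots \otimes_F D_k$, Proposition~\ref{prop:extend}, and Proposition~\ref{prop:sc-AL} to identify $\ker\Phi_n$ with the set of partially commutative identities of $\boldsymbol{D}$. The difference lies in the finishing step. You conclude by constructing the generic-element valuation $\rho$ and observing that $A \cong F\langle\boldsymbol{X}\rangle/\ker\Phi_n = F\langle\boldsymbol{X}\rangle/\ker\rho$ embeds in the domain $D \otimes_F F[u]$; this is the faithful partially-commutative analogue of Cohn's argument for a single generic matrix ring. The paper instead argues directly that $\operatorname{im}\Phi_n$ has no zero divisors: from $\Phi_n(fg)=0$ it gets the identity $fg=0$ on $\boldsymbol{D}$, notes that a domain then satisfies $fhg=0$ for \emph{every} $h$, transfers this to $M_n(L)^{\otimes k}$, and specialises $h$ to matrix units to force $f=0$ or $g=0$ there. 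Your route is more constructive and yields an explicit embedding of $A$ into a domain; the paper's $fhg$/matrix-unit trick avoids the auxiliary indeterminates $u$ but is a little less transparent. The remark on the base field is apt and applies equally to the paper's own proof.
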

\begin{proof}
Recall that if $L$ is an algebraically closed field extension of $F$,
then we have $D_i \otimes_F L \cong M_n(L)$ for each $i$.  By
Proposition~\ref{prop:extend} it follows that an identity $f=0$ holds
in $D_1 \otimes_F \cdots \otimes_F D_k$ if and only if it holds in $M_n(L)
\otimes_F \cdots \otimes_F M_n(L)$.  But by
Proposition~\ref{prop:sc-AL} the latter holds if and only if
$\Phi_n(f)$ is identically zero.

To prove the proposition it will suffice to show that the image of
$\Phi_n$ contains no zero divisors, since the latter is a surjective map.
Now given $f,g \in F\langle X_1\rangle \otimes \cdots \otimes F\langle
X_k\rangle$ with $\Phi_n(fg)=0$, we have that $D_1 \otimes_F \cdots
\otimes_F D_k$ satisfies the identity $fg=0$.  Since $D_1 \otimes_F \cdots
\otimes_F D_k$ is a domain, it follows that it satisfies the identity
$fhg=0$ for any $h$ in $F\langle X_1\rangle \otimes \cdots \otimes
F\langle X_k\rangle$.  But now $M_n(L) \otimes_F \cdots \otimes_F
M_n(L)$ satisfies the identity $fhg=0$ for any $h$.  Since $h$ can
take the value of an arbitrary matrix (in particular, any matrix unit)
it follows that $M_n(L) \otimes_F \cdots \otimes_F M_n(L)$ satisfies
either the identity $f=0$ or $g=0$, and so, by
Proposition~\ref{prop:sc-AL} again, either $\Phi_n(f)=0$ or
$\Phi_n(g)=0$.  \qed
\end{proof}

\subsubsection{Acknowledgments}
The author is grateful to Louis Rowen for helpful pointers in the
proof of Proposition~\ref{prop:ore}.

\bibliographystyle{plain} %or alpha or splncs
%\bibliography{db}
%\end{document}

\end{document}